\PassOptionsToPackage{dvipsnames}{xcolor}	
\documentclass[12pt]{article}

\usepackage[T1]{fontenc}
\usepackage{comment}
\usepackage{lmodern}
\usepackage{setspace}
\usepackage{etoolbox}
\makeatletter
\patchcmd{\appendix}{\@Alph}{\@Roman}{}{}
\makeatother

\usepackage{amsmath}
\usepackage{amssymb}
\usepackage{amsthm} 
\usepackage{mathtools}
\usepackage{mathrsfs}
\usepackage{breqn}
\usepackage{bigints}
\usepackage{bbm}

\usepackage[margin=1in]{geometry}
\usepackage{enumerate}
\usepackage{enumitem}
\setlist[enumerate,1]{label=(\arabic*)}
\setlist[itemize,1]{label=--}    
\usepackage{xcolor}
\usepackage{hyperref}
\usepackage{float}
\usepackage{indentfirst}
\usepackage{caption}

\usepackage{sectsty}
\sectionfont{\centering}
\subsectionfont{\centering}

\usepackage{tikz}
\usetikzlibrary{decorations.pathreplacing}
\usepackage{mathtools}
\usetikzlibrary{positioning}
\usepackage{graphicx}
\usepackage{pgfplots}
\pgfplotsset{compat=1.18} 
\definecolor{dodgerblue}{rgb}{0.1211,0.5664,1}
\definecolor{golden}{RGB}{255, 206, 84}
\definecolor{chocolate}{rgb}{1,0.5,0.14}
\definecolor{burgundy}{rgb}{0.78125, 0.1, 0.246}
\definecolor{budgreen}{rgb}{0.53125, 0.7226, 0.3125}
\definecolor{almond}{rgb}{0.985, 0.922, 0.785}

\newcommand{\bb}{\mathbb}

\newcommand{\lims}{\lim\limits}

\newcommand{\mcal}{\mathcal}

\renewcommand{\epsilon}{\varepsilon}

\newtheorem{theorem}{Theorem}

\newtheorem{lemma}{Lemma}

\newtheorem{proposition}{Proposition}
\newtheorem{observation}{Observation}

\newtheorem*{statement*}{Result}

\theoremstyle{definition}
\newtheorem{definition}{Definition}

\DeclareMathOperator*{\argmax}{\arg\!\max}

\DeclareTextFontCommand{\emph}{\slshape}

\usepackage{natbib}
\nocite{*}

\title{Fair Commodity Taxation}
\author{Eric Gao and Daniel Luo\thanks{Massachusetts Institute of Technology, ericgao@mit.edu, daniel57@mit.edu. We are indebted to Drew Fudenberg and Alex Wolitzky for invaluable guidance and Eric Yan for conversations early in the development of this project. We also thank Piotr Dworczak, Marina Halac, Michelle Hyun-Kim, Navin Kartik, Andrew Komo, Elliot Lipnowski, Stephen Morris, Ellen Muir, Bryant Xia, Ivan Werning, Kai Hao Yang, and seminar and conference participants at MIT, Yale, NEEPC 2026, Stony Brook 2026, and IMD Days for helpful comments. Coarse.ink and Refine.ink were used to check the paper for clarity and correctness. Luo acknowledges financial support from
the NSF Graduate Research Fellowship. Part of this paper was written while Luo was visiting Yale University. All errors are, of course, ours alone.}}
\date{\today}

\begin{document}
\maketitle
\begin{abstract}
We study optimal taxation in monopoly screening problems with a redistribution-minded regulator. When the regulator moves before the monopolist, any nondecreasing allocation can be implemented as the monopolist's optimum under an appropriate tax schedule. Despite this flexibility, we show that our redistribution-minded regulator never strictly benefits from randomization or nonlinear taxes. Under a strengthening of Myersonian regularity, which we call \emph{strong regularity}, we characterize the tax rates on the fairness-efficiency frontier: it is enough for the regulator to consider policies which consist of a per-unit excise tax and a lump-sum rebate, with all (and only) tax rates between the consumer-surplus maximizing rate and the Laffer peak appearing on the frontier. Finally, we show subsidies are never used, and that every frontier threshold policy rations more than the unregulated monopolist. 
\end{abstract}

\noindent \textbf{Keywords}: Excise Taxes, Information Rents, Luxury Taxation. 

\newpage
\onehalfspacing

\section{Introduction}

Commodity taxes are often a staple of redistributive taxation. For example, luxury taxes are frequently implemented by U.S. state governments---such as Connecticut's tax on jewelry and handbags (\cite{ConnecticutDRSUseTax}) and Washington's tax on luxury motor vehicles (\cite{WashingtonDORLuxuryVehicleTax2025})---and were utilized by the US federal government as recently as the 1990s. Moreover, commodity taxes are often used in lieu of income taxes by developing countries as their primary redistributive instrument (\cite{besley2014taxing}). 

The redistributive impact of these taxes is often at the center of public debate. For example, opponents of California’s excise tax on gas criticize its supposedly regressive nature, while proponents argue that the burden falls primarily on more affluent households, who already benefit disproportionately from market activity. Despite their prevalence, a formal understanding of the redistributive effects of commodity taxes remains elusive.

In this paper, we propose a redistributive theory of commodity taxes in monopoly screening problems. 
We adopt the perspective of a regulator---such as a state lawmaker barred from income taxation\footnote{For example, Washington state is constitutionally barred from implementing income taxes, and thus lawmakers must resort to property or commodity taxes to raise revenue and achieve redistributive goals.} or a developing country with limited state capacity\footnote{\citeauthor{besley2014taxing} suggest that in states with low capacity, commodity taxes are much easier to implement than income taxes.}---who lacks access to income taxes but can regulate the market for goods. 
We ask whether our redistribution-minded regulator can ameliorate inequality using only these a priori regressive tools. 
We show not only that the answer is yes, but also characterize (under mild regularity conditions) the least distortionary way to attain redistributive targets with minimal efficiency losses. 

\paragraph{Our Model.} Formally, we model a single monopolist interacting downstream with a continuum of consumers, each of whom has a private value $\theta$ for the good, and upstream with a tax authority (henceforth, regulator), who can levy a quantity-dependent tax on the good. We thus introduce and analyze optimal monopoly taxation in a screening model with consumer private information. 

Our objective is to find the quantity-dependent taxation schemes which maximize fairness, viewed through the lens of \emph{second order stochastic dominance} (SOSD). Say a policy $\sigma$ is more fair than another, $\sigma'$, if and only if the induced distribution of consumer surplus under $\sigma$, supposing the monopolist prices optimally, SOSD dominates the distribution induced by $\sigma'$. We say that a taxation scheme $\sigma$ is on the fairness-efficiency frontier if, among all $\sigma'$ which yield at least as much consumer surplus, $\sigma$ is SOSD-maximal. Our main contribution is to characterize the frontier of fair and efficient policies---those where consumer surplus cannot be raised without making the market less fair. 

Before previewing our results, we briefly discuss the core economic force that drives them. Because our regulator conditions only on consumers' purchases, their tax schedule separates cleanly: a uniform \emph{rebate}, which the consumer keeps in full, and a quantity-contingent \emph{tax}, which the firm  completely internalizes by adjusting its allocation schedule. These two components are linked by a budget balance requirement: the total value of the rebate must equal the revenue raised from the tax. The regulator's problem thus reduces to a single component: the design of a feasible allocation schedule $q$ (which then influences linearly the rebate consumers face). This simplification allows us to turn the taxation problem into a mechanism design problem of implementing a feasible allocation and identifying its distribution of information rents, and allows us to draw from tools in the mechanism design literature to characterize the optimal solution. 

\paragraph{Our Results.} 
To this end, we must first characterize the set of allocation schedules the monopolist can induce. Proposition \ref{p: increasing is optimal} shows that this is essentially the entire set of potentially feasible outcomes\footnote{Recall that only increasing schedules may satisfy incentive compatibility for consumers.}---for every increasing schedule, we construct a tax policy that makes it optimal for the firm to implement that schedule. Thus, even though the regulator is limited in the set of tools that are available to them, the underlying design problem is rich. 
We then leverage this result to prove our main results: a sufficiency result (Theorem \ref{t: only thresholds}) and a characterization of the frontier (Theorem \ref{t: threshold mechanisms frontier}). Theorem \ref{t: only thresholds} shows that, in order to find any maximally fair distribution of consumer surplus, it is sufficient for the regulator to restrict attention to \emph{simple excise taxes}: those which impose a per-unit tax on consumption of the good, and lump-sum redistribute the revenue to all consumers in the market. The key step is to show that any point on the fairness-efficiency frontier can be written as an optimization problem linear in the firm's choice of allocation, which allows us to apply standard extreme point results. 
Theorem \ref{t: threshold mechanisms frontier} then identifies a regularity condition on the distribution of consumers---which we call strong regularity\footnote{Formally, the probability density must be nondecreasing and log-concave.}---which is sufficient to characterize which excise taxes are on the frontier. Specifically, Theorem \ref{t: threshold mechanisms frontier} shows that the excise tax rates on the frontier are exactly those between the surplus-maximizing tax rate and the rate at the peak of the induced Laffer curve. This result is much more subtle, and relies on a novel targeting versus efficiency argument that characterizes exactly when raising revenue and redistributing it flattens the distribution of information rents. 

Together, Theorems \ref{t: only thresholds} and \ref{t: threshold mechanisms frontier} have the striking consequence that deterministic allocations are without loss of optimality: It is never strictly optimal to randomize who receives the good. Moreover, Theorem \ref{t: threshold mechanisms frontier} gives an easy way to compute which thresholds are on the frontier as a function of the marginal response of the monopolist to the tax policy. We leverage this simple characterization in an example when values are uniformly distributed to give a complete computational characterization of the set of all excise taxes that induce a distribution of consumer surplus on the frontier. 

Finally, we show under strong regularity that any mechanism on the fairness-efficiency frontier must induce strictly more rationing than the monopolist's unrestricted choice, hence only excise taxes (and not subsidies) can be on the fairness-efficiency frontier. It is thus always more efficient to attain a redistributive goal by taxing the good and lump-sum redistributing the proceeds than to subsidize the good to encourage more people to purchase (a \textit{supra-pricing} result). 
We interpret this result as identifying conditions under which commodity taxes are exactly the right mechanism to attain our regulator's redistributive goals.

\paragraph{Discussion.} Our modeling innovations allow us to take seriously the redistributive impacts of commodity taxation in a way that standard models are ill-equipped to do. In particular, absent monopoly power and private information on the side of the consumer, two seminal results constrain commodity taxes. First are the welfare theorems, which suggest redistribution and allocation should be separated, so only lump sum taxes should be considered because commodity taxes unnecessarily distort production when chasing allocative redistribution. 
Second is the result by \cite{atkinson1976design}: Given an optimal nonlinear income tax, differentiated commodity taxes are redundant as a redistributive instrument when preferences are separable between consumption and labor.

Our model departs from standard economic theory in two fundamental ways. First, we suppose a consumer's place in the distribution of surplus is both private information and endogenous to the tax policy. 
Second, we abstract away from income to isolate heterogeneity arising from consumer tastes and purchasing patterns. This shift departs from the \citeauthor{atkinson1976design} framework, where inequality is traditionally driven by heterogeneous ability and addressed via income taxation. 
Collectively, these deviations from canonical models allow us to model the specific constraints faced by redistribution-minded regulators whose primary available policy lever is commodity taxation. 

Our departures afford several advantages. First, they allow us to rationalize and evaluate the redistributive impact of commodity taxes as they are implemented in practice. Second, by explicitly excluding income taxation and direct monopoly regulation from the model, we provide rigorous second-best guidance to policymakers who lack those instruments but maintain redistributive objectives\footnote{Famously, Washington state was able to impose a capital gains tax by arguing that capital gains were not income---and therefore not property---and instead an excise tax on the sale of a financial commodity. Our analysis can shed light on what types of excise taxes Washington policymakers should use to attain progressive redistributive targets.}.

The rest of the paper is organized as follows. The remainder of this section briefly discusses the related literature. Section 2 introduces the formal model. Section 3 characterizes fair excise tax policies. Section 4 states and proves our supra-pricing property. Section 5 works through a luxury taxation example that applies our results. Section 6 concludes. All omitted proofs can be found in the appendix.

\subsection{Related Literature}

We relate to a few strands of literature. 
First, our exercise is inspired by the literature on redistributive market design (\cite{condorelli2013market}, \cite{dworczak2021redistribution}, and \cite{akbarpour2024redistributive}), which studies the impact that non-market interventions such as quotas, a public option (\cite{kang2023public}), or ordeals (\cite{tokarski2026screening}) have on allocative and redistributive efficiency. 
While we share a common motivation---seeking to understand the redistributive impact of market interventions on the distribution of consumer surplus---our tools and messages are quite distinct. In contrast to these papers, which often assume the designer's objective includes more variables than they have tools to screen on, we show that there is a role for government intervention even when consumers' welfare weights in the planner's objective are not their private information. Moreover, their policy conclusions suggest a range of possible interventions are constrained optimal (including subsidization and rationing); in contrast, we show commodity taxes can be used to attain our regulator's redistributive goals. Our redistributive criterion---which looks at the distribution of information rents in a monopoly screening problem---is similar in spirit to the criterion of \cite{pusztai2025redistributing}, though our goals and methods are quite different; for example, they study an information design problem. Closer to our paper, \cite{yang2025comparison} compare screening devices by how they distribute information rents when welfare weights are unobservable; we similarly study the distribution of information rents, but through tax-and-subsidy policies without unobservable welfare weights. 

We also relate to the literature on monopoly regulation, pioneered by \cite{baron1982regulating}, \cite{lewis1988regulating}, and more recently \cite{guo2025robust}, who study regulation of a monopolist when there is private information about firm cost or market demand (see also \cite{armstrong2007recent} for a survey). In contrast, our private information is on the side of the consumer, not the firm, and consequently our prescribed regulation tools are different. Moreover, we seek to rank entire distributions of consumer surplus, instead of maximizing weighted efficiency. While their models focus on settings such as utility regulation, we think our model, where firm marginal cost is known but consumer valuation is private, is a better model for commodity markets, or settings where the tax authority only has excise instruments and cannot directly regulate prices (i.e. optimal redistributive tariffs). More recently, this literature has also studied regulation where the regulator lacks access to the rich set of tools studied in \citeauthor{baron1982regulating}: \cite{armstrong1995nonlinear} study a model where the regulator directly constrains the prices of a monopolist screening heterogeneous consumers, while \cite{amador2022regulating} suppose transfers are not feasible and characterize the optimal price regulation. In contrast, we do not allow for price regulation at all, and instead consider a tax authority whose only instrument is to choose a tax schedule that must be measurable with respect to the observed quantity.

Further back, there is a well-developed literature studying the incidence of excise taxes and their impact on consumer surplus in public finances (see \cite{fullerton2002tax} for a survey). We will not discuss our relation to every paper in this field, but a few are particularly related. First, in the absence of screening, several papers study the impact of excise taxes on welfare in environments with a single good (\cite{anderson2001efficiency}, \cite{anderson2001tax}, \cite{carbonnier2014incidence}, and \cite{dannunzio2020multipart}). 
In contrast, we are able to study the impact of excise taxes on welfare in the presence of consumer private information, and provide a full characterization of the monopolist's optimal behavior. Second, our supra-pricing result is similar in spirit to the result that monopolists may oversupply quality relative to first best in two-sided markets (\cite{kind2008efficiency}), though we obtain it in a distinct environment.

Finally, we comment on a smaller literature that grounds redistributive motives for commodity taxes under different distortions in the market. Our notion of fairness is related to equity concerns first raised in \cite{atkinson1970measurement}. While \cite{atkinson1976design} show that in separable economies with income commodity taxes are never optimal from a redistributive standpoint, \cite{Saez2002} showed commodity taxes can be optimal in an \citeauthor{atkinson1976design} framework when tastes and earning ability are correlated; further afield,  \cite{AllcottLockwoodTaubinsky2018} and \cite{AllcottLockwoodTaubinsky2019} give a salience-based microfoundation for optimal commodity taxes, and explicitly allow the designer to have a redistributive motive.
We depart from this literature by (1) removing income taxes and (2) explicitly modeling a monopolistic seller. We contribute to it by computing, under strong regularity, the set of all frontier outcomes inducible by taxes on the fairness-efficiency frontier when the designer has a redistributive motive for consumer surplus.

\section{Model}

A single monopolist supplies a good at zero marginal cost. They interact upstream with a single regulator, and downstream with a continuum of consumers indexed by an unknown type $\theta \in \Theta$, which represents the consumers' valuation for the good. In particular, each consumer obtains $\theta q$ from consuming $q \in [0, 1]$ of the good. Suppose $\theta \sim F \in \Delta(\Theta)$, where $F$ is a commonly known full support distribution that admits a positive twice-differentiable density $f$. Moreover, suppose $F$ is Myersonian regular so that it has a monotone virtual value. 

Fix some Borel message space $M$. A \emph{game} is a measurable function $\Gamma: M \to \Delta([0, 1]) \times \bb{R}$ specifying a (potentially random) allocation quantity and transfer for each message; let $G$ be the set of all games. We assume throughout that there must always exist a message $m_0$ with $\Gamma(m_0) = (0, 0)$. When there is no ambiguity we will write a game $\Gamma$ as $(q, t) = \Gamma$, representing the quantity and transfer rules, respectively. 

The timing (and strategies) of players are as follows. 
\begin{enumerate}
    \item The government chooses a (bounded, measurable, upper semi-continuous) tax policy $\sigma: [0, 1] \to \bb{R}$ mapping firm quantities into subsidies or taxes to be given to the consumer. 
    \item Given $\sigma$, the firm chooses a game $\Gamma \in G$. 
    \item Given $\Gamma$ and $\sigma$, consumers choose a messaging strategy $m: \Theta \to \Delta(M)$.
    \item Given $m$, $\Gamma(m)$ is realized. Consumers of type $\theta$ get payoffs\footnote{Here we abuse notation, since $m(\theta)$ is a lottery over messages; we are extending the value of $q, t$ to lotteries in the usual way by taking an expectation over the realizations of randomness, in particular 
    \[ U(\theta) = \bb{E}_{m \sim m(\theta)}\bb{E}_{q \sim q(m)}[\theta q - t(m) + \sigma(q)] \]}  
    \[ U(\theta) = \theta q(m(\theta)) - t(m(\theta)) + \sigma(q(m(\theta)))\]
    while the firm gets a payoff of $\bb{E}_F[t(m(\theta))]$. 
\end{enumerate}

We remark here that we only allow the regulator to condition on the observed quantity purchased by the consumer. We impose this restriction for three reasons. First, it mirrors the implementation of commodity taxes in practice: specific excises on tobacco, alcohol, and motor fuel are assessed per unit sold. A tax authority observing a transaction sees what was bought, not who bought it or why. Second, the consumer is not allowed to privately report their type to the government. This captures an institutional constraint mirroring the administration of excise taxes by governments; they operate through the seller as a remitting agent, without communication between the government and the consumer directly about valuations. Third, because the government only allocates transfers without any good, including communication would not meaningfully increase the set of tax policies available to the principal, but would dramatically complicate the analysis.

A mechanism is a tuple $(\sigma, \Gamma, m)$ consisting of a tax policy, a game, and a messaging strategy. Two mechanisms are \emph{equivalent} if they lead to the same allocation, transfer, and subsidy rule on the path of play. A mechanism is \emph{direct} if $M = \Theta \cup \{m_0\}$ and \emph{incentive compatible} if $m(\theta) = \theta$ for each type $\theta$. More formally, incentive compatibility requires that (1) the mechanism is direct (so that $M = \Theta \cup \{m_0\}$, where $m_0$ is the outside option), and (2) 
\[\theta \in \argmax_{\theta' \in M} \left\{ \theta q(\theta') - t(\theta') + \sigma(q(\theta')) \right\}.\]

Let $\mcal M(\Gamma; \sigma)$ be the set of all messaging strategies such that each type is best responding to $\Gamma$. 
A game $\Gamma$ is \emph{optimal} given $\sigma$ if it attains 
\[ \sup_{\Gamma \in G} \sup_{m \in \mcal M(\Gamma; \sigma)} \bb{E}_F[t(m(\theta))].\]

We now have our first result, a version of the revelation principle. 

\begin{proposition}[Revelation Principle]
    \label{p: revelation principle} 
    Fix $(\sigma, \Gamma, m)$ with $m \in \mcal M(\Gamma; \sigma)$. There exists an equivalent incentive compatible direct mechanism $(\sigma, \{q, t\}, \text{Id})$. If $\Gamma$ was optimal given $\sigma$ then so is $(\{q, t\})$. 
\end{proposition}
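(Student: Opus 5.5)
The plan is the usual composition argument, with care for the fact that $\sigma$ enters the consumer's payoff through the realized (possibly random) quantity. Given $(\sigma,\Gamma,m)$ with $m\in\mcal M(\Gamma;\sigma)$, I would define the direct game on $M=\Theta\cup\{m_0\}$ by composing $\Gamma$ with $m$. For each $\theta$, let $q(\theta)\in\Delta([0,1])$ be the lottery over quantities obtained by first drawing a message from $m(\theta)$ and then a quantity from the lottery $q(\cdot)$ prescribed by $\Gamma$. Let $t(\theta)=\bb{E}_{m'\sim m(\theta)}[t(m')]$, and set $(q,t)(m_0)=(0,0)$. Measurability of these objects in $\theta$ follows because $m$ and $\Gamma$ are measurable kernels, so their composition is a measurable kernel. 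By construction, under truthful reporting, type $\theta$ faces exactly the same joint distribution over realized quantity and expected transfer as on the original path of play. Since $\sigma$ depends only on the realized quantity, $\bb{E}[\sigma(q)]$ is also unchanged. Hence the direct mechanism $(\sigma,\{q,t\},\text{Id})$ induces the same allocation, transfer and subsidy outcomes, so it is equivalent.

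Next I would verify incentive compatibility. For any report $\theta'\in\Theta$, the payoff of type $\theta$ from reporting $\theta'$ equals
\[
\bb{E}_{m'\sim m(\theta')}\bb{E}_{x\sim q(m')}\big[\theta x - t(m')+\sigma(x)\big],
\]
which is the payoff type $\theta$ would get in $\Gamma$ by playing the mixed message $m(\theta')$. This is feasible in $\Gamma$, so it is weakly below $\theta$'s equilibrium payoff, and that equilibrium payoff is exactly the truthful payoff in the direct game. The report $m_0$ gives payoff $\sigma(0)$. This is also available in $\Gamma$, since $\Gamma$ contains a message with $\Gamma(m_0)=(0,0)$, so it is likewise dominated. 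Linearity of expected utility in the lottery is the only ingredient here; boundedness of $\sigma$ keeps all expectations finite.

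Finally, I would address optimality. The firm's revenue $\bb{E}_F[t(\theta)]=\bb{E}_F[\bb{E}_{m(\theta)}t]$ coincides with that of $(\Gamma,m)$. The direct game is itself an element of $G$, with message space $\Theta\cup\{m_0\}$, and truthful reporting lies in $\mcal M(\{q,t\};\sigma)$. It therefore attains the same value as $\Gamma$, which by assumption achieves the supremum, so it is optimal.

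The main subtlety I expect is message-space bookkeeping. $G$ is defined for a fixed Borel $M$, while the direct mechanism uses $\Theta\cup\{m_0\}$. I would handle this by reading the supremum as taken over games on any Borel message space, or by embedding $\Theta\cup\{m_0\}$ into $M$ and assigning the outside option to all unused messages. Either way the value comparison is unaffected, because adding dominated messages changes neither the best responses nor the revenue. A secondary point is that $m$ may send a positive mass of types to a message in $M$ that yields $(0,0)$ other than $m_0$. This is harmless, since those types simply receive $q(\theta)=\delta_0$ and $t(\theta)=0$ in the direct game.
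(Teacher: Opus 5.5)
Your proposal is correct and takes the same route as the paper. You compose $\Gamma$ with $m$, inherit incentive compatibility because a misreport $\theta'$ replicates playing $m(\theta')$ in $\Gamma$, and obtain optimality because the direct mechanism earns exactly the revenue of $(\Gamma,m)$. You also fill in details the paper omits: the $m_0$ report, expectations of $\sigma$ over random quantities, and message-space bookkeeping. One caution: like the paper, your final step needs the hypothesis read as the pair $(\Gamma,m)$ attaining the supremum. It is not enough for $\Gamma$ to attain it under some other equilibrium; your phrase ``the same value as $\Gamma$'' quietly assumes this reading.
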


The first part of Proposition \ref{p: revelation principle}---that it is without loss of generality to restrict to truthful incentive compatible messaging strategies for the consumers---follows from standard arguments. The novel part of the argument is that such direct mechanisms are \textit{optimal} given $\sigma$ so long as $(\Gamma, m)$ was optimal given $\sigma$ in the original mechanism. 

Given Proposition \ref{p: revelation principle}, we can define the induced consumer surplus (alternatively, consumer information rents) given a direct mechanism $(q, t)$ and tax policy $\sigma$ as 
\[ I(\theta; q, t, \sigma) = \theta q(\theta) - t(\theta) + \sigma(q(\theta)). \]

\section{The Fairness-Efficiency Frontier}

Our main results identify when and how tax policies affect the distribution of consumer surplus. This section proceeds as follows.  First, we characterize the set of all allocations a regulator can implement subject to firm optimality and incentive compatibility (Proposition \ref{p: increasing is optimal}). Second, we leverage this characterization and extreme point techniques to show threshold mechanisms are optimal regardless of the regulator's welfare weights (Theorem \ref{t: only thresholds}).  Third, we impose a regularity condition on $F$ to characterize \textit{which} thresholds are on the frontier (Theorem \ref{t: threshold mechanisms frontier}).

\subsection{Firm Optimality}

How do firms price optimally given a tax policy $\sigma$? Fix one such $\sigma$. The envelope theorem implies that 
\[ \int_{0}^{\theta} q(s) ds = \theta q(\theta) - t(\theta) + \sigma(q(\theta)) - \left( 0 q(0) - t(0) + \sigma(q(0)) \right). \]
Reading off this expression gives that the information rent generated to the lowest type is given by $\sigma(q(0)) - t(0)$; that is, the subsidy at the lowest type modifies only the individual rationality constraint.
Standard arguments imply that at an optimal mechanism, the individual rationality constraint must bind for the lowest type. Because not participating in the mechanism gives the agent an allocation $q = 0$ and transfer $t = 0$, this implies 
\[ 0 q(0) - t(0) + \sigma(q(0)) = \sigma(0). \]
Rearranging the expression above yields an expression for revenue, $\int_{\Theta} t(s) dF(s)$, given a mechanism $q$, as 
\[ \int_{\Theta} \left(\theta q(\theta) - \int_{0}^{\theta} q(s) ds + \sigma(q(\theta)) - \sigma(0) \right) dF(\theta) \]

Integrating by parts and rearranging thus gives that the firm's revenue for some fixed $q$ is exactly 
\[ R(q; \sigma) = \underbrace{\int_{\Theta} q(\theta) \left(\theta - \frac{1 - F(\theta)}{f(\theta)}  \right) f(\theta) d\theta}_{\text{Myersonian Revenue}} + \underbrace{\int_{\Theta} (\sigma(q(\theta)) - \sigma(0)) f(\theta) d\theta}_{\text{Policy Distortion}}.  \]
Standard arguments imply that $q$ is implementable by some transfer rule $t$ if and only if it is increasing; we thus have that the firm's optimal pricing problem solves 
\[ R(\sigma) = \max_{q \text{ increasing}} R(q; \sigma).\]
Because $\sigma$ is measurable with respect to $q$ and not $t$, the firm fully internalizes the quantity-contingent component $\sigma(\cdot) - \sigma(0)$, and adjusts $t$ 1-for-1.  The regulator's only remaining lever is thus changing the induced allocation $q$; once $q$ is fixed, budget balance pins the uniform component $\sigma(0)$, which accrues to consumers and is financed out of firm profit. 

Classical Myersonian reasoning tells us that all and only increasing allocations $q$ are implementable by some transfer rule for a fixed tax. The main result of this subsection is a stronger, complementary result: all (and only) increasing allocations are part of an \textit{optimal} choice for the firm for some tax policy. In particular, for any increasing $q$ we construct some tax policy $\sigma$ such that $q$ maximizes the firm's revenue. Formally,

\begin{definition}
\label{d: firm optimality}
    An allocation $q$ is \emph{optimal} given $\sigma$ if 
    \[ q \in \argmax_{\tilde q \text{ implementable}} R(\tilde q; \sigma). \]
\end{definition}

\begin{proposition}
\label{p: increasing is optimal} $q$ is nondecreasing if and only if there exists $\sigma$ where $q$ is optimal given $\sigma$. 
\end{proposition}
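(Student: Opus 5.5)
The ``only if'' direction is immediate. By Definition \ref{d: firm optimality}, an optimal $q$ is implementable. As recorded just before the definition, implementability by some transfer rule $t$ holds if and only if $q$ is increasing. The envelope/IC argument is unaffected by $\sigma$, since $\sigma(q(\theta))$ enters the consumer's payoff as a function of the reported allocation only. So all the work is in the ``if'' direction: given a nondecreasing $q$, construct a bounded, measurable, upper semi-continuous $\sigma$ that makes $q$ revenue-maximizing.

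The plan is to make $q$ optimal \emph{pointwise}. Write $\phi(\theta) = \theta - \frac{1-F(\theta)}{f(\theta)}$, so that
\[ R(\tilde q;\sigma) = \int_\Theta \bigl(\tilde q(\theta)\phi(\theta) + \sigma(\tilde q(\theta)) - \sigma(0)\bigr) f(\theta)\, d\theta. \]
It suffices to find $\sigma$ with
\[ q(\theta) \in \argmax_{x\in[0,1]} \{ x\phi(\theta) + \sigma(x)\} \quad \text{for every } \theta. \]
Then $q$ maximizes the integrand pointwise over all (even non-monotone) allocations, and a fortiori over implementable ones. I would take $\sigma = -g$ with $g$ convex on $[0,1]$ and $\phi(\theta) \in \partial g(q(\theta))$ for every $\theta$. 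Concavity of $x\mapsto x\phi(\theta) - g(x)$ and the first-order condition then give the pointwise claim. Because $\sigma$ is concave, if the firm randomizes over quantities, Jensen's inequality shows that replacing a lottery by its mean weakly raises revenue. So deterministic $q$ remains optimal even against random allocations.

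The construction of $g$ uses that the graph $\{(q(\theta),\phi(\theta))\}$ is monotone, since both $q$ and $\phi$ are nondecreasing (Myersonian regularity). Define
\[ \psi(y) = \inf\{\phi(\theta) : q(\theta) > y\}, \]
with $\psi(y) = \sup_\Theta \phi$ if the set is empty, and set $g(x) = \int_0^x \psi(y)\,dy$. Then $\psi$ is nondecreasing, so $g$ is convex with $\partial g(x) = [\psi(x^-),\psi(x^+)]$. I then need to check $\psi(q(\theta)^-) \le \phi(\theta) \le \psi(q(\theta)^+)$.
\begin{itemize}
\item For $y < q(\theta)$, the type $\theta$ itself lies in the defining set, so $\psi(y) \le \phi(\theta)$.
\item For $y > q(\theta)$, every $\theta'$ with $q(\theta') > y > q(\theta)$ satisfies $\theta' > \theta$, because $q$ is nondecreasing. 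Hence $\phi(\theta') \ge \phi(\theta)$ and $\psi(y) \ge \phi(\theta)$.
\end{itemize}
The resulting $\sigma = -g$ is continuous, hence measurable and upper semi-continuous, and satisfies $\sigma(0)=0$.

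The main obstacle I expect is the regularity bookkeeping rather than the idea. First, boundedness of $\sigma$ requires $\phi$ bounded on the relevant range. On a compact $\Theta$ with positive density this holds. Otherwise I would replace $\phi$ by a bounded truncation, which leaves the subgradient inclusion intact on the support actually reached by $q$. Second, ties and flat pieces of $q$ (where many $\theta$ share the same $q(\theta)$, forcing a kink in $g$) and jumps of $q$ (where $\psi$ is flat) need care. The one-sided-limit check above is designed to handle exactly these cases, including the endpoints $x=0$ and $x=1$, where the argmax condition only needs a one-sided inequality.
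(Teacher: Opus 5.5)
Your proof is correct. It arrives at the same tax schedule as the paper but verifies optimality by a different route.

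Since $\phi$ is continuous and nondecreasing and $\{\theta : q(\theta) > y\}$ is an upper interval of $[0,1]$, your auxiliary function equals $\phi(q_+^{-1}(y))$. Here $q_+^{-1}(y) = \inf\{\theta : q(\theta) > y\}$, with the convention $q_+^{-1}(y) = 1$ when the set is empty. So your $\sigma(x) = -\int_0^x \phi(q_+^{-1}(y))\,dy$ is exactly the paper's $\sigma(Q) = \int_0^Q \bigl[\frac{1-F(q^{-1}(s))}{f(q^{-1}(s))} - q^{-1}(s)\bigr]\,ds$, with $q^{-1}$ read as a generalized inverse.

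The paper's route to that formula has four steps:
\begin{enumerate}
\item It derives the formula from the first-order condition, only for $q$ strictly increasing on $q^{-1}((0,1))$.
\item It checks the second-order condition by differentiating $q^{-1}$.
\item It bridges jumps of $q$ with an unspecified extension of $q^{-1}$.
\item It reaches weakly increasing $q$ by approximating with strictly increasing allocations and passing to a limit via Helly's selection theorem and Berge's maximum theorem.
\end{enumerate}

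You handle every nondecreasing $q$ in one step. Flat pieces of $q$ become kinks of $\sigma$, and jumps of $q$ become linear pieces. Your two one-sided inequalities are equivalent to
$x\phi(\theta)+\sigma(x) - q(\theta)\phi(\theta) - \sigma(q(\theta)) = \int_{q(\theta)}^x [\phi(\theta) - \phi(q_+^{-1}(y))]\,dy \le 0$ for all $x$. This replaces both the second-order check and the limiting argument, including the joint continuity of $R$ in $(\tilde q,\sigma)$ that Berge's theorem requires and the paper only gestures at.

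Your version also proves more. Pointwise optimality means $q$ beats every measurable allocation, not only monotone ones. Concavity of $\sigma$ together with Jensen's inequality disposes of the lotteries that the game formalism allows, which the paper's proof does not discuss.

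What the paper's route offers in exchange is the heuristic origin of the formula, the ODE $\sigma'(Q) = -\phi(q^{-1}(Q))$. In your route, regularity enters exactly once: it makes $\{(q(\theta),\phi(\theta))\}$ a monotone relation, and any monotone relation on the line lies in the subdifferential graph of a convex function.

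Your truncation caveat is unnecessary here, because $\phi$ is continuous on the compact $\Theta = [0,1]$ and hence bounded. As you describe it, the truncation would also not in general preserve the inclusion for a type with $\phi(\theta)$ above the cap and $q(\theta) < 1$.
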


We interpret this result as a version of a \emph{taxation principle}: anything that is incentive compatible can be made optimal for the firm by some tax/subsidy policy. This implies that there are a priori no restrictions that can be made on the set of allocation rules which can be induced by a tax policy. Consequently, there might exist points on the fairness-efficiency frontier which are quite complicated, and which require complex taxation schemes to attain. We turn to this problem next.  

The intuition behind Proposition \ref{p: increasing is optimal} is as follows. Clearly only increasing allocations are optimal because they are the only ones that can be incentive compatible. To construct some $\sigma$ that renders $q$ optimal, first restrict to $q$ which are strictly increasing (and thus invertible). On every interior value of $q$, the first order condition is moreover sufficient by Myersonian regularity. The first-order condition defines a differential equation for $\sigma'$ in terms of the target quantity and the inverse hazard rate. Explicitly solving for this equation gives $\sigma(q)$ as the following integral equation
\[ \sigma(q(\theta)) = \sigma(0) + \int_0^{q(\theta)}  \frac{1-F(q^{-1}(s))}{f(q^{-1}(s))} - q^{-1}(s) ds. \] 
Setting $\sigma(Q)$ as above then implies $q$ is optimal for each type $\theta$, as desired. A compactness argument extends this construction to all $q$, even when it is only weakly increasing. The details are spelled out in the appendix.

One remark about the way $\sigma$ interacts with the firm's profit is worth making. Define the firm's \emph{virtual profit} from type $\theta$ to be given by 
\[ v(\theta) = \max_{Q \in [0, 1]} \left\{\psi(\theta) Q + \sigma(Q) - \sigma(0)\right\}\]
where $\psi(\theta) = \theta - H(\theta)$ is the firm's virtual value, and $H(\theta) = \frac{1 - F(\theta)}{f(\theta)}$ is the inverse hazard rate. 
Applying the envelope theorem for the firm gives that $v'(\theta) = \psi'(\theta)q(\theta)$ almost everywhere, and so we get that 
\[ v(\theta) = v(0) + \int_0^\theta \psi'(s) q(s) ds. \]
Because $\psi(\theta) q(\theta) + \sigma(q(\theta)) - \sigma(0) = v(\theta)$ for almost every $\theta$, we get that 
\[ \sigma(q(\theta)) - \sigma(0) = v(0) + \int_0^\theta \psi'(s) q(s) ds - \psi(\theta) q(\theta). \]
Integrating against $F$ and applying Fubini's theorem gives that 
\[ \int_\Theta \int_0^\theta \psi'(s) q(s) ds f(\theta) d\theta = \int_\Theta \psi'(s) q(s)\left[ \int_s^1 f(\theta) d\theta\right] ds = \int_\Theta (1 - F(s)) \psi'(s) q(s) ds\]
and hence we get that 
\[ \bb{E}_F[\sigma(q) - \sigma(0)] = v(0) + \int_\Theta \left[ (1 - F(s))\psi'(s) - f(s) \psi(s)\right]q(s) ds. \]

\subsection{Cumulative Information Rents}

This subsection gives two equivalent characterizations of our fairness-efficiency criterion. 

\begin{definition}
    An information rent $I: \Theta \to \bb{R}$ is \emph{feasible} if it is induced by an $F$-budget balanced ($\bb{E}_F[\sigma(q(\theta))] = 0$) optimal mechanism $(q, t)$ given some $\sigma$. 
\end{definition}

Feasible information rents are those that can be induced by some tax policy and optimal pricing by the firm, supposing that the government must be budget-balanced---that is, they cannot artificially add surplus into the economy\footnote{If this assumption were violated, then the government could always improve everyone's welfare by introducing an arbitrarily high lump-sum subsidy.}.

We have the following definition of the fairness-efficiency frontier for information rents. 
\begin{definition}
    \label{d: f maximality}
   Let $I, I'$ be feasible.
         \begin{itemize}
             \item $I$ is \emph{more efficient than} $I'$ if $\bb{E}_F[I] \geq \bb{E}_F[I']$. 
            \item  $I$ is \emph{more fair than} $I'$ if $(F \circ I) \succsim_{SOSD} (F \circ I')$. 
         \end{itemize}
   $I$ is \emph{$F$-unimprovable} if there does not exist a feasible $I'$ which is more fair than $I$ and more efficient than $I$, with at least one strict.\footnote{Note that it is sufficient to focus on the second order stochastic dominance part of the definition, since $I$ being more fair than $I'$ requires that $I$ is weakly more efficient than $I'$. We include the more efficient part of this definition to illuminate our frontier characterization.}
\end{definition}

Unimprovability trades off on two dimensions: efficiency, which asks about the total consumer surplus in the economy, and fairness, which tracks how that surplus is distributed among people with different types. Together, $F$-unimprovable information rents are exactly those which trace out the fairness-efficiency frontier. 

How does our definition of inequality-aversion relate to standard notions, where a designer maximizes weighted welfare with type-dependent Pareto welfare weights? 
To answer this question, we need a little bit of notation (which will be useful later in the proofs of Theorems \ref{t: only thresholds} and \ref{t: threshold mechanisms frontier} as well). 

Let 
\[ W(s) = (1 - F(s))\psi'(s) - f(s) \psi(s)\]
be the marginal transfer to the firm (see Section 3.4 for discussion). Note that if $\sigma$ is budget balanced, then 
\[ \sigma(0) = -\int_\Theta W(\theta) q(\theta) d\theta - v(0) \]
by the computations in Section 3.1, and so in particular 
\[ I(\theta; q) = \int_0^\theta q(s) ds - \int_\Theta W(s) q(s) ds - v(0) \]
is affine in $q$. Note that $I(\theta; q)$ is maximized pointwise when the lowest type is never served; we focus without loss of optimality on these mechanisms. 
Define 
\[ R_q(u) = I(F^{-1}(u); q) = \int_0^{F^{-1}(u)}q(s) ds - \int_\Theta W(s) q(s) ds\]
to be the information rents an individual of percentile $u$ receives, working with the generalized inverse whenever necessary. Define $H_q(p) = \int_0^p R_q(u) du$ to be cumulative information rents up until the $p$-th percentile. Let $G_q(x) = F(I^{-1}(x; q))$ be the distribution of information rents under allocation rule $q$ where $I^{-1}$ is the generalized inverse. 
Finally, define $ \Psi(k; p) = H_{q_k}(p) $
to be the cumulative information rents to percentile $p$ for a threshold mechanism at $k$.

With this notation, we can prove our characterization, which gives several equivalent statements about our fairness order. Let 
\[ \mcal L = \{ \lambda: \Theta \to \bb{R}_+ : \lambda \text{ is bounded, nonincreasing, and left-continuous}\}. \]

\begin{lemma}\label{l: char of SOSD by cum info rents}
    Fix any feasible information rents $I, I'$ induced by $q, q'$. The following are equivalent: 
    \begin{enumerate}
        \item $I$ is more fair than $I'$. 
        \item $H_q(p) \geq H_{q'}(p)$ for all $p \in [0, 1]$. 
        \item $\bb{E}_F[\lambda I] \geq \bb{E}_F[\lambda I']$ for all $\lambda \in \mcal L$. 
    \end{enumerate}
\end{lemma}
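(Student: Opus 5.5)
The plan is to pass everything through the quantile functions of the rent distributions. Since the envelope formula gives $I(\theta;q)=\int_0^\theta q(s)\,ds - \int_\Theta W(s)q(s)\,ds - v(0)$ with $q\ge 0$, every feasible $I$ is nondecreasing in $\theta$. Hence $R_q(u)=I(F^{-1}(u);q)$ is a (generalized) quantile function of $G_q$, the distribution of rents when $\theta\sim F$. The SOSD comparison between $G_q$ and $G_{q'}$ then has the classical integrated-quantile (generalized Lorenz) characterization: $G_q \succsim_{SOSD} G_{q'}$ iff $\int_0^p G_q^{-1}(u)\,du \ge \int_0^p G_{q'}^{-1}(u)\,du$ for all $p\in[0,1]$. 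This is exactly $H_q(p)\ge H_{q'}(p)$, which gives (1)$\Leftrightarrow$(2). I would either cite this (Shorrocks 1983; Shaked--Shanthikumar) or prove it briefly. One route goes through $\int_{-\infty}^x G(y)\,dy$ and Legendre duality. The other uses the standard fact that $\mathbb E[\min(X,c)]$ comparisons for all $c$ are equivalent to integrated-quantile comparisons: $\int_0^p G^{-1} = \min_c\{ pc - \mathbb E[(c-X)^+]\}$-type identities.

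For (2)$\Leftrightarrow$(3), I would change variables $u=F(\theta)$ to write $\mathbb E_F[\lambda I]=\int_0^1 \tilde\lambda(u) R_q(u)\,du$ with $\tilde\lambda=\lambda\circ F^{-1}$, which is nonnegative, bounded and nonincreasing. Any such $\tilde\lambda$ can be represented, up to a null set that is irrelevant because $R_q$ is bounded, as
\[ \tilde\lambda(u) = \tilde\lambda(1^-) + \int_{(u,1)} d\mu(p) \]
for a nonnegative measure $\mu$. Fubini then gives
\[ \int_0^1 \tilde\lambda R_q\,du = \tilde\lambda(1^-) H_q(1) + \int_{(0,1)} H_q(p)\,d\mu(p). \]
Condition (2) makes both terms weakly larger for $q$ than for $q'$, since $\tilde\lambda(1^-)\ge0$ and $\mu\ge 0$. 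This proves (2)$\Rightarrow$(3). For the converse, take $\lambda=\mathbf 1\{\theta\le F^{-1}(p)\}$, which is bounded, nonincreasing, nonnegative and left-continuous, so it lies in $\mcal L$. Then $\mathbb E_F[\lambda I]=H_q(p)$, and (3) yields (2).

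The main obstacle is the technical bookkeeping rather than any idea. First, $F^{-1}$ must be a genuine inverse; it is, because $f>0$, so atoms of $G_q$ on flat parts of $I$ cause no issue. Second, the quantile characterization of SOSD must be stated in the right direction for ``more fair,'' meaning a higher generalized Lorenz curve. I would state it as a separate claim and check the boundary case $p=1$. That case encodes the mean comparison, consistent with the footnote that more fair implies weakly more efficient.
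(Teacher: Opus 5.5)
Your proposal is correct and follows essentially the paper's route. For (1)$\Leftrightarrow$(2), the paper proves the Legendre-duality identity $\int_{-\infty}^z G_q(x)\,dx=\max_p\{zp-H_q(p)\}$ directly; this is the generalized Lorenz characterization you propose to cite, which the paper attributes to Shaked--Shanthikumar. For (2)$\Leftrightarrow$(3), it writes $\lambda(\theta)=\mu([\theta,1])$, applies Fubini to get $\mathbb{E}_F[\lambda I]=\int H_q(F(x))\,\mu(dx)$, and uses indicator weights $\mathbf 1\{\theta\le x\}$ for the converse, exactly as you do; your constant term $\tilde\lambda(1^-)H_q(1)$ is just the paper's atom of $\mu$ at $\theta=1$.
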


A few remarks about Lemma \ref{l: char of SOSD by cum info rents} are in order. First, we note the equivalence between (1) and (2) is known, \footnote{We thank Kai Hao Yang for pointing us to this reference and result therein.} and is recorded (with significantly distinct notation) in \cite{shaked2007stochastic}, Theorem 4.A.3 (albeit with an incomplete proof). For completeness of the exposition, we give a formal proof in the appendix. The key idea is to rewrite the SOSD order as an inequality based on the convex conjugates of the cumulative information rents, which follow by an integration by parts argument. 
Second, the equivalence between (1) and (3) relates our work to existing literature on redistributive market design concretely (see, for example, \cite{dworczak2021redistribution}). In particular, they fix welfare weights that are \emph{unobservable} and characterize the optimal mechanism when the designer has to screen welfare weights. In contrast, we suppose welfare weights are known and characterize the optimal mechanism to redistribute surplus when the tax authority has limited tools at their disposal (i.e. on excise taxes). Our equivalence implies that frontier mechanisms are exactly those which are on the standard Pareto frontier, but for nonincreasing (instead of arbitrary) welfare weights; i.e., our planner is explicitly inequality-averse. Intuitively, the equivalence between (1) and (2) establishes that the SOSD order cares about cumulative information rents. Given arbitrary Pareto weights, maximizing cumulative information rents becomes maximizing against cumulative Pareto weights when swapping the order of integration when computing. Then, cumulative Pareto weights are non-increasing, so the problem becomes equivalent to a designer starting out by choosing Pareto weights in $\mcal L$.

\subsection{Only Threshold Mechanisms}

We first start by introducing a particularly simple class of mechanisms; those which sell the good to all (and only) consumers above a certain cutoff type, $k$. 

\begin{definition}
    $(q, t)$ is a \emph{threshold mechanism} if $q(\theta) = \mathbf{1}_{[k, 1]}(\theta)$ for some $k \in [0, 1]$. 
\end{definition}

\begin{definition}
An \emph{excise tax} is a policy of the form $\sigma(q) = C - \tau q$ for constants $C, \tau \in \bb{R}$. 
\end{definition}

Intuitively, an excise tax subsidizes some amount $C$ for not buying ($q = 0$), and subsidizes some (generally negative) amount $C - \tau$ for buying ($q = 1$). Excise taxes are useful because they exactly characterize the set of all threshold mechanisms.

\begin{observation}
\label{p: posted prices wlog}
$(q, t)$ is a threshold mechanism if and only if $(q, t)$ is optimal given an excise tax $\sigma$. 
\end{observation}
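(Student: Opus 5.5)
Under an excise tax $\sigma(Q) = C - \tau Q$, the policy-distortion term in $R(q;\sigma)$ is $\int_\Theta (\sigma(q(\theta)) - \sigma(0)) f(\theta)\,d\theta = -\tau \int_\Theta q(\theta) f(\theta)\,d\theta$. The firm's problem therefore becomes
\[ \max_{q \text{ nondecreasing}} \int_\Theta q(\theta)\bigl(\psi(\theta) - \tau\bigr) f(\theta)\, d\theta, \]
which is the standard Myersonian problem with the virtual value shifted down by $\tau$. Because $F$ is Myersonian regular, $\psi - \tau$ is nondecreasing, so pointwise maximization with $q(\theta) \in [0,1]$ is attained by $q(\theta) = \mathbf{1}\{\psi(\theta) \geq \tau\}$. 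This rule is automatically nondecreasing, hence implementable, and it equals $\mathbf{1}_{[k,1]}$ with $k = \inf\{\theta : \psi(\theta) \ge \tau\}$, truncated to $[0,1]$.

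For the direction ``excise tax $\Rightarrow$ threshold'', I would note that any optimal $q$ must agree with this pointwise maximizer wherever $\psi(\theta) \neq \tau$. The only freedom is on the level set $\{\psi = \tau\}$. Under regularity with $\psi$ strictly increasing (or at least with level sets of measure zero), that set is null, so every optimal $q$ is a.e.\ a threshold rule. Since the paper identifies mechanisms up to equivalence of on-path outcomes, modifying $q$ on a null set does not matter. If $\psi$ could be flat on an interval, then an optimal $q$ could randomize there; I would handle this by reading the observation as ``there exists an optimal threshold mechanism,'' or by assuming strict monotonicity of $\psi$. I expect this measure-zero and non-uniqueness bookkeeping, and the meaning of ``optimal'' when ties occur, to be the only delicate point.

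For the converse, given a threshold $k \in [0,1]$, I would choose $\tau = \psi(k)$ for interior $k$. For $k = 0$ I would take $\tau \le \psi(0)$, and for $k = 1$ I would take $\tau \ge \psi(1)$, using continuity of $\psi$, which follows from $f$ being twice differentiable and positive. Then $\mathbf{1}_{[k,1]}$ is a pointwise maximizer of the integrand above, and is therefore optimal given $\sigma(Q) = C - \tau Q$ for any $C$. I would choose $C$ to satisfy budget balance if desired, although optimality itself does not depend on $C$. The transfers $t$ then follow from the envelope formula with binding IR at the bottom, giving a posted price $k$ net of the tax adjustment. This completes the equivalence.
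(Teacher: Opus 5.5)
Your proof is correct and follows the paper's argument. Pointwise maximization of the $\tau$-shifted virtual value gives $\mathbf{1}\{\psi(\theta)\ge\tau\}$, which is a threshold by regularity, and every threshold $k$ is recovered with $\tau=\psi(k)$ using continuity and monotonicity of $\psi$. On the tie/flat-$\psi$ issue you flag, the paper does not need strict monotonicity or an existence reading: its standing assumption that the firm breaks indifference in favor of supplying makes the selected rule exactly $\mathbf{1}\{\psi\ge\tau\}$, which is a closed upper interval even when $\psi$ has flat pieces.
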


We assume that when the firm is indifferent between multiple quantities, it breaks ties in favor of supplying the good. 
Observation \ref{p: posted prices wlog} implies that, whenever the entire fairness-efficiency frontier is traced out by threshold mechanisms, they can be implemented by exactly the set of excise taxes. Thus economies where this is the case are relatively easy to analyze. 

In principle, the set of $F$-unimprovable information rents could be induced by potentially complicated mechanisms; Proposition \ref{p: increasing is optimal} implies that any information rent which is induced by an arbitrary increasing $q$ can be implemented by some tax policy. Moreover, the second order stochastic dominance order is highly incomplete; a natural conjecture then is that the frontier of $F$-unimprovable allocations is large. Theorem \ref{t: only thresholds}, however, shows that this is not the case. In fact, we show, perhaps surprisingly, that a regulator maximizing any inequality-averse objective need only consider threshold mechanisms. Formally, 

\begin{theorem}
\label{t: only thresholds}
Fix any $\lambda \in \mcal L$ and let $I_k$ be the information rent induced by a threshold mechanism at $k$. Then 
\[ \max_{I \text{ feasible}}\ \mathbb E_F[\lambda I]\;=\;\max_{k\in[0,1]}\ \mathbb E_F[\lambda I_k]. \]
That is, any inequality-averse regulator need only search over threshold mechanisms. 
\end{theorem}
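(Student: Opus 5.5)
The plan is to reduce the regulator's problem to maximizing a linear functional over nondecreasing allocations $q:\Theta\to[0,1]$ and then invoke an extreme-point argument. By Proposition \ref{p: increasing is optimal}, every nondecreasing $q$ is optimal for the firm under some tax policy $\sigma$. Budget balance then pins down $\sigma(0)$, so the set of feasible information rents is exactly $\{I(\cdot;q): q \text{ nondecreasing}\}$, up to the constant $v(0)$.

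The first thing to settle is whether this constant can be normalized. Since the lowest type is not served without loss, one can take $v(0)=0$. Alternatively, under Myersonian regularity with $\psi(0)<0$, the choice $Q=0$ is optimal at $\theta=0$, which again gives $v(0)=0$. With this normalization,
\[ I(\theta;q)=\int_0^\theta q(s)\,ds-\int_\Theta W(s)q(s)\,ds \]
is linear in $q$.

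Next, fix $\lambda\in\mathcal L$ and compute $\mathbb E_F[\lambda I(\cdot;q)]$. By Fubini,
\[ \mathbb E_F[\lambda I(\cdot;q)]=\int_\Theta q(s)\Big[\int_s^1\lambda(\theta)f(\theta)\,d\theta-W(s)\,\mathbb E_F[\lambda]\Big]ds \equiv \int_\Theta q(s)\,\kappa_\lambda(s)\,ds. \]
This is a bounded linear functional $J_\lambda(q)$ on the set $\mathcal Q$ of nondecreasing $q:\Theta\to[0,1]$. Its integrand $\kappa_\lambda$ is integrable because $f$ is twice differentiable and positive and $\lambda$ is bounded, so $W$ is bounded.

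The third step uses the structure of $\mathcal Q$, which is convex and compact in the weak-$*$ (equivalently, $L^1$ / pointwise a.e. by Helly's selection theorem) topology. $J_\lambda$ is continuous in that topology, so a maximum exists. Its extreme points are the threshold functions $\mathbf 1_{[k,1]}$ (or $\mathbf 1_{(k,1]}$, which agrees a.e.). This can be seen directly via the layer-cake representation $q(\theta)=\int_0^1\mathbf 1\{q(\theta)\ge x\}\,dx$, in which each upper level set of a nondecreasing $q$ is an interval $[k_x,1]$ or $(k_x,1]$. That gives
\[ J_\lambda(q)=\int_0^1 J_\lambda(\mathbf 1_{[k_x,1]})\,dx\le \sup_k J_\lambda(\mathbf 1_{[k,1]}). \]
The layer-cake route avoids Choquet theory entirely. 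It also shows $\sup_q J_\lambda=\sup_k J_\lambda(\mathbf 1_{[k,1]})$, and the right side is attained because $k\mapsto\int_k^1\kappa_\lambda$ is continuous on $[0,1]$.

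Finally, each threshold allocation is implemented by an excise tax (Observation \ref{p: posted prices wlog}) with budget balance, so $I_k$ is feasible. Combined with the layer-cake bound, this gives equality of the two maxima.

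I expect the main obstacle to be the first step rather than the extreme-point argument. That step requires checking that the term $v(0)$ and the tie-breaking and upper-semicontinuity requirements on $\sigma$ do not make the feasible set depend nonlinearly on $q$. It also requires checking that the $\sigma$ produced by Proposition \ref{p: increasing is optimal} for a weakly increasing $q$ yields exactly the affine formula for $I$. If $v(0)$ cannot be normalized to zero, I would show it is weakly reduced by the construction, so that removing it only helps. I would then argue that the maximum over feasible $I$ is still bounded above by the linear functional evaluated at the best threshold.
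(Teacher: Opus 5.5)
Your proof is correct and has the same skeleton as the paper's. Both arguments use firm optimality and budget balance to write $\mathbb E_F[\lambda I]$ as a linear functional of a nondecreasing $q$, and then reduce to thresholds as extreme points. The difference lies in how that last step is carried out.

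\textbf{The extreme-point step.} The paper invokes Choquet's theorem and Bauer's maximum principle, together with B\"orgers' characterization of the extreme points. Bauer's principle tacitly needs a topology in which the monotone set is compact and the functional is upper semicontinuous, and the paper never supplies one. Your layer-cake identity $q=\int_0^1\mathbf 1\{q\ge x\}\,dx$ is that Choquet representation made explicit: the paper's threshold distribution $\tilde q$ is just $q$ read as a CDF of $k$. With linearity and Fubini it gives $J_\lambda(q)=\int_0^1 J_\lambda(\mathbf 1_{[k_x,1]})\,dx$ directly. Attainment then follows from continuity of $k\mapsto\int_k^1\kappa_\lambda$, so your compactness remark is not even needed.

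\textbf{What the theorem actually requires.} Your structure also shows that the theorem needs less than either write-up invokes. The upper bound uses only the easy direction of Proposition~\ref{p: increasing is optimal}, namely that an optimal $q$ is nondecreasing. The lower bound uses only Observation~\ref{p: posted prices wlog}. You can therefore drop your appeal to the forward direction of Proposition~\ref{p: increasing is optimal}, whose proof in the paper is the most delicate part.

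\textbf{Correction on $v(0)$.} Your second justification is false for general $\sigma$. You claim that $\psi(0)<0$ makes $Q=0$ optimal at $\theta=0$, but any schedule with $\sigma(Q)-\sigma(0)>-\psi(0)Q$ for some $Q>0$ makes serving type $0$ strictly profitable, giving $v(0)>0$. The fallback you sketch is the right argument, and it is short:
\begin{itemize}
\item $v(0)\ge 0$, because $Q=0$ is available in the maximization defining $v(0)$.
\item Since $\lambda\ge 0$, every feasible $I$ satisfies $\mathbb E_F[\lambda I]=J_\lambda(q)-v(0)\,\mathbb E_F[\lambda]\le J_\lambda(q)$.
\item For the excise tax $\tau=\psi(k)$, $v(0)=\max_Q(\psi(0)-\psi(k))Q=0$ because $\psi$ is monotone, so $\mathbb E_F[\lambda I_k]=J_\lambda(\mathbf 1_{[k,1]})$ exactly.
\end{itemize}
This inequality replaces the paper's informal ``the lowest type is not served without loss'' with a bound that points in the direction the proof needs.
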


Theorem \ref{t: only thresholds} is the main takeaway of the paper. It implies that excise taxes or subsidies are enough to attain the maximum of any inequality-averse objective over the frontier, even though the tax authority could in principle implement a complicated tax schedule to induce any increasing allocation and consequently potentially rich distributions of information rents (Proposition \ref{p: increasing is optimal}). 
Consequently, simple luxury taxes or subsidies---those that impose a per-unit tax/subsidy on each unit of the item sold---(weakly) outperform any other way to redistribute surplus via commodity taxation. This includes possibly randomizing the allocation of the good (through some type of quota) or nonlinear tax schedules. 

We conclude this subsection with the proof of Theorem \ref{t: only thresholds}. 
\begin{proof}
Let $\tilde q(k)$ be the distribution over thresholds induced by an allocation rule $q$, which exists by Choquet's theorem. Note the mapping $q \to \tilde q$ is affine. Moreover note $\tilde q$ is indeed a distribution over threshold mechanisms, since all (and only) increasing allocations can possibly be implemented by some subsidy policy. 
We then have that 
\begin{align*}
     \max_{I \text{ feasible}} \bb{E}_F[\lambda I] = \max_{q \text{ nondecreasing}} \left\{ \bb{E}[\lambda(\theta) I(\theta; q) ]\right\} 
     \\ = \max_{q \text{ nondecreasing}} \left\{ \int_\Theta \lambda(\theta) \left( \int_0^\theta q(s) ds - \int_\Theta W(s) q(s) ds \right) f(\theta) d\theta \right\} 
\end{align*}
first by Proposition \ref{p: increasing is optimal} and then by our computations above. The last objective is linear in $q$, so by Bauer's maximum principle the optimum must be an extreme point of the set of nondecreasing mechanisms which by \cite{borgers2015introduction} is exactly the set of threshold mechanisms.  
\end{proof}

\subsection{Which Threshold Mechanisms?}
Theorem \ref{t: only thresholds} implies that any inequality-averse regulator need only search over threshold mechanisms in order to maximize their objective. However, it leaves open the question of \emph{which} thresholds maximize some objective for the regulator. Similarly, it leaves open the question of whether the regulator need only consider threshold mechanisms in order to induce all frontier information rents. In this subsection, we introduce a regularity condition which allows us to give a precise characterization of the set of all frontier information rents and identify which threshold mechanisms attain the frontier. 

What is the object that we need to discipline? Note that by Observation \ref{p: posted prices wlog}, the threshold mechanism with threshold $k$ is implemented by the excise tax $\tau = \psi(k)$, set equal to the virtual value at $k$; it is levied on the $1 - F(k)$ consumers who buy. The government's revenue from inducing threshold mechanism $k$ is then 
\[ A(k) = \psi(k)(1 - F(k)) = \sigma(0)\] 
where the second equality holds by budget balance. This implies that under our proposed threshold mechanism, a consumer of type $\theta$ receives 
\[ I(\theta | k) = \max\{\theta - k, 0\} + A(k); \]
differentiating in $k$ gives that the uniform rebate changes at rate 
\[ A'(k) = (1 - F(k))\psi'(k) - f(k)\psi(k) = W(k). \]
This object is then the marginal transfer to the firm. Its Laffer curve is indexed by $\tau = \psi(k)$, and so we have that $\frac{dA}{d\tau} = \frac{W(k)}{\psi'(k)}$. Hence $A'(k) = 0$ exactly when the Laffer curve is at its local extremum. 

Note that $W(\cdot)$ here is the same expression as defined in Section 3.2, where it was interpreted as the ``marginal transfer to the firm.'' To see why this interpretation is equivalent to the one above, note that 
\[ R(q; \sigma) = \int_\Theta q(\theta) \psi(\theta) f(\theta) d\theta + \int_\Theta W(s) q(s) ds \]
and hence $\int_\Theta Wq d\theta$ is firm's profit in excess of the Myersonian revenue at the same quantity. $W(\theta)$ is thus the marginal transfer to the firm necessary for the government to induce a marginally larger allocation to type $\theta$. 

The relevant regularity condition will be monotonicity of $W$, instead of monotonicity of the virtual value function. 

\begin{definition}
    $F$ is \emph{strongly regular} if $f$ is nondecreasing and log-concave. 
\end{definition}

We call this condition strong regularity because it strengthens Myersonian regularity, which is implied by log-concavity of $f$ alone. It is satisfied by some of the standard distributions in auction theory, including the uniform and power distributions ($\beta(\alpha, 1)$ for some $\alpha \geq 1$), but is not satisfied for (truncated) exponential distributions or other concave distributions which are sometimes convenient to assume. 
It yields the following regularity property of $W$. 

\begin{proposition}
\label{p: monotonicity properties of W}
Let $F \in \Delta(\Theta)$ be strongly regular. $W$ and $\frac{W}{1 - F}$ are strictly decreasing. 
\end{proposition}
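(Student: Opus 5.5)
The approach is to compute $W$ in closed form and write it, and $W/(1-F)$, as sums of monotone pieces. Strong regularity gives exactly two facts: $f' \geq 0$, because $f$ is nondecreasing and differentiable, and $f'/f$ is nonincreasing, by log-concavity. Write $H = (1-F)/f$. Log-concavity of $f$ implies log-concavity of $1-F$, so $H$ is nonincreasing. I would use this standard fact rather than reprove it.

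\textbf{Step 1: closed form for $W$.} Rather than differentiate $\psi$, I would use the identity $W = A'$ from Section 3.4, where
\[ A(\theta) = \psi(\theta)(1-F(\theta)) = \theta(1-F(\theta)) - \frac{(1-F(\theta))^2}{f(\theta)}. \]
Differentiating, and using $\frac{d}{d\theta}\frac{(1-F)^2}{f} = -2(1-F) - \frac{(1-F)^2 f'}{f^2}$, gives
\[ W(\theta) = 3(1-F(\theta)) - \theta f(\theta) + H(\theta)^2 f'(\theta). \]

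\textbf{Step 2: $W$ is strictly decreasing.} I would treat the three terms separately.
\begin{itemize}
\item $3(1-F)$ is strictly decreasing, since $f>0$.
\item $-\theta f$ is nonincreasing, because $\theta f$ is a product of nonnegative nondecreasing functions.
\item The last term can be written as $H^2 f' = (1-F)^2 \cdot \frac{f'}{f} \cdot \frac{1}{f}$. Each factor is nonnegative and nonincreasing: $(1-F)^2$ is decreasing, $f'/f \geq 0$ is nonincreasing by log-concavity, and $1/f$ is nonincreasing since $f$ is nondecreasing. A product of nonnegative nonincreasing functions is nonincreasing.
\end{itemize}
The sum of one strictly decreasing term and two nonincreasing terms is strictly decreasing.

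\textbf{Step 3: $W/(1-F)$ is strictly decreasing.} Dividing the closed form by $1-F$ on the interior of $\Theta$ gives
\[ \frac{W}{1-F} = 3 - \frac{\theta}{H(\theta)} + H(\theta)\frac{f'(\theta)}{f(\theta)}. \]
\begin{itemize}
\item The middle term: $\theta/H = \theta \cdot (1/H)$, with $\theta$ strictly increasing and $1/H$ positive and nondecreasing. So $\theta/H$ is strictly increasing for $\theta>0$, and $-\theta/H$ is strictly decreasing.
\item The last term is a product of the nonnegative nonincreasing functions $H$ and $f'/f$, hence nonincreasing.
\end{itemize}
So $W/(1-F)$ is strictly decreasing on the interior. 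Monotonicity then extends to the endpoints by continuity; $\theta/H$ blows up at the top as $H \to 0$, which is harmless.

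\textbf{Main obstacle.} The main obstacle is choosing the right decomposition. The natural route, $W = (1-F)\cdot\frac{W}{1-F}$, fails to give monotonicity of $W$ when $W<0$, because the product of a decreasing positive function with a decreasing negative one need not be decreasing. That is why the additive form from Step 1 is needed. The remaining points to check are:
\begin{itemize}
\item the inherited fact that $H$ is nonincreasing under log-concavity of $f$;
\item care at the boundary $\theta = 1$, where $1-F = 0$.
\end{itemize}
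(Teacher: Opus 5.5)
Your proposal is correct and follows essentially the paper's route: the same closed form $W = 3(1-F) - \theta f + H^2 f'$ with a term-by-term sign argument (your product-monotonicity phrasing is the derivative-free version of the paper's bound $W' \leq -4f$), and the same expression $W/(1-F) = 3 - \theta/H + H f'/f$. Your step showing that $H f'/f$ is nonincreasing is the paper's concavity of $\psi$ in disguise, since $\psi' = 2 + H f'/f$. One small simplification: you do not need log-concavity of $1-F$ to get $H$ nonincreasing, because under strong regularity $H = (1-F)\cdot(1/f)$ is already a product of nonnegative nonincreasing functions (equivalently, $H' = -1 - Hf'/f \leq -1$).
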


The proof follows from some straightforward computations and can be found in the appendix. These monotonicity properties, however, afford us the following theorem. 

\begin{theorem}
\label{t: threshold mechanisms frontier}
    Suppose $F$ is strongly regular. Let $k_1, k_2$ solve $W(k_1) = 1-F(k_1), W(k_2) = 0$ respectively. Then,
    for any $\lambda \in \mcal L$,
    \[ \max_{I \text{ feasible}}\ \mathbb E_F[\lambda I]\;=\;\max_{k\in[k_1, k_2]}\ \mathbb E_F[\lambda I_k].\]
    Conversely, for all $k \in [k_1, k_2]$, $I_k$ is $F$-unimprovable. 
\end{theorem}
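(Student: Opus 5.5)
By Theorem \ref{t: only thresholds}, every feasible $I$ is weakly beaten by some threshold mechanism for each $\lambda\in\mcal L$. So the first claim reduces to a one-dimensional problem in $k$. Using $I(\theta\mid k)=\max\{\theta-k,0\}+A(k)$ and $A'(k)=W(k)$, I would write
\[ \Phi_\lambda(k):=\bb E_F[\lambda I_k]=\int_k^1 \lambda(\theta)(\theta-k)f(\theta)\,d\theta + A(k)\,L,\qquad L:=\int_0^1\lambda f\,d\theta, \]
and differentiate:
\[ \Phi_\lambda'(k) = -\Lambda(k) + W(k)L,\qquad \Lambda(k):=\int_k^1\lambda f\,d\theta. \]
The key observation is that, because $\lambda$ is nonnegative and nonincreasing, the upper tail carries at most its unweighted share of weight:
\[ 0\le \Lambda(k)\le (1-F(k))\,L. \]
This follows from a Chebyshev-type inequality: the average of $\lambda$ on $[k,1]$ is at most its average on $[0,1]$. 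If $L=0$ the claim is trivial, so assume $L>0$.

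For $k<k_1$, Proposition \ref{p: monotonicity properties of W} (strict decrease of $W/(1-F)$) gives $W(k)>1-F(k)$. Hence $\Phi_\lambda'(k)>L(1-F(k))-\Lambda(k)\ge0$. For $k>k_2$, strict decrease of $W$ gives $W(k)<0$, so $\Phi_\lambda'(k)<0$. Since $W(k_1)=1-F(k_1)>0=W(k_2)$ and $W$ is decreasing, we have $k_1<k_2$, so $[k_1,k_2]$ is a nonempty interval. Therefore $\Phi_\lambda$ increases on $[0,k_1]$ and decreases on $[k_2,1]$, its maximum over $[0,1]$ lies in $[k_1,k_2]$, and combining with Theorem \ref{t: only thresholds} gives the first display. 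Existence and uniqueness of $k_1,k_2$ also come from the strict monotonicity in Proposition \ref{p: monotonicity properties of W}, together with checking signs at the endpoints.

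For the converse, fix $k\in[k_1,k_2]$. I plan to exhibit $\lambda\in\mcal L$ for which $I_k$ is the \emph{unique} maximizer of $\bb E_F[\lambda I]$ over all feasible $I$. Unimprovability then follows from Lemma \ref{l: char of SOSD by cum info rents}: any feasible $I'$ that is more fair than $I_k$ satisfies $\bb E_F[\lambda I']\ge\bb E_F[\lambda I_k]$, which forces $I'=I_k$.

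The natural candidate is $\lambda_a=a+\mathbf 1_{[0,k]}$ with $a\ge0$. Then
\[ \frac{\Lambda(k)}{L}=\frac{a(1-F(k))}{a+F(k)}, \]
which sweeps $[0,1-F(k))$ continuously as $a$ ranges over $[0,\infty)$. Since $W(k)\in[0,1-F(k)]$, I can choose $a$ with $\Phi'_{\lambda_a}(k)=0$; the boundary case $k=k_1$ is handled with a limiting or alternative $\lambda$ such as $\lambda\equiv1$. For $k'>k$ the derivative is strictly negative:
\[ \Lambda(k')/L=\frac{a(1-F(k'))}{a+F(k)}, \quad\text{while}\quad \frac{W(k')}{1-F(k')}<\frac{W(k)}{1-F(k)}=\frac{a}{a+F(k)}. \]

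The main obstacle is the region $k'<k$. There I need
\[ h(k'):=LW(k')-\Lambda(k')>0,\qquad h'=LW'+\lambda f, \]
and this sign is not automatic. My first attempt would be to show $h$ is decreasing on $[0,k]$ with $h(k)=0$, using log-concavity of $f$ and $f'\ge0$ to bound $-W'$ below by $f$. If that fails, I would move the kink of $\lambda$, taking $\lambda=a+b\mathbf 1_{[0,c]}$, and tune $(a,b,c)$ so that $\Phi_\lambda$ is single-peaked at $k$.

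Finally, to pass from "unique maximizing threshold" to "unique maximizing feasible $I$", I would use the Choquet representation from the proof of Theorem \ref{t: only thresholds}. A nondecreasing $q$ is a mixture of thresholds and the objective is linear in $q$, so any $q$ whose mixture puts positive mass away from $k$ is strictly worse. Hence $q=\mathbf 1_{[k,1]}$ almost everywhere, and therefore $I'=I_k$.
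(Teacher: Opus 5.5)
Your forward direction is correct. The bound $\Lambda(k)\le(1-F(k))L$ is an aggregated form of the paper's percentile-wise computation $\partial_k\Psi(k;p)=pW(k)-\max\{p-F(k),0\}$. Your passage from a unique maximizing threshold to unimprovability via the Choquet mixture is also fine.

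The gap is in the converse. The difficulty you flag for $k'<k$ is not a technicality: the weight $\lambda_a=a+\mathbf 1_{[0,k]}$ can fail outright under strong regularity. For $k'<k$ you have
$h(k')=(a+F(k))W(k')-a(1-F(k'))-(F(k)-F(k'))$ with $h(k)=0$, and
$h'(k^-)=(a+F(k))W'(k)+(a+1)f(k)$.
So you need $-W'(k)/f(k)>(a+1)/(a+F(k))$. Wherever $f$ is locally constant, $W'=-4f$ exactly, so the requirement becomes $3a+4F(k)>1$. At $k=k_2$ your recipe gives $a=0$, and the requirement is $F(k_2)>1/4$.

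Now take $f$ log-linear with a steep slope on $[0,0.7]$ and constant on $[0.7,1]$, smoothed at the kink. This $f$ is nondecreasing and log-concave, and $W(\theta)=c(3-4\theta)$ on $[0.7,1]$. Hence $k_2=3/4$ and $F(k_2)\approx 1/6$. Then $h'(k_2^-)=f(k_2)(1-4F(k_2))>0$, so $\Phi_\lambda'<0$ just below $k_2$. Thus $k_2$ does not even locally maximize $\Phi_\lambda$ for $\lambda=\mathbf 1_{[0,k_2]}$.

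No lower bound of the form $-W'\ge cf$ can rescue this choice. The group $[0,k]$ contains marginal buyers who gain from a lower threshold, and when $F(k)$ is small their gain outweighs the lost rebate.

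The missing idea is to place the kink where nobody buys at any frontier threshold: take $\lambda=a+\mathbf 1_{[0,c]}$ with $c\le k_1$. Then for every $k'\ge c$ you have $\Lambda(k')=a(1-F(k'))$ and $L=a+F(c)$, so
\[ \Phi_\lambda'(k')=(1-F(k'))\Big[(a+F(c))\frac{W(k')}{1-F(k')}-a\Big]. \]
By Proposition \ref{p: monotonicity properties of W}, this changes sign exactly once, from positive to negative. Choose $a$ so that it vanishes at $k$, using $\lambda\equiv 1$ for $k=k_1$. Your own forward-direction argument then gives $\Phi_\lambda'>0$ on $[0,k_1)$, so $k$ is the unique maximizing threshold.

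This is exactly the paper's construction: weight $\alpha$ on $H(p_L\mid\cdot)$ with $p_L\le F(k_1)$, plus weight $1-\alpha$ on $H(1\mid\cdot)$. Your fallback family $a+b\mathbf 1_{[0,c]}$ contains this choice. However, the proposal neither singles out $c\le k_1$ nor explains why it works, so as written the converse is not established.
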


Together, we see Theorems \ref{t: only thresholds} and \ref{t: threshold mechanisms frontier} as being analogous to the standard Myersonian framework, but applied to our characterization of the fairness-efficiency frontier. First, Theorem \ref{t: only thresholds} uses an ordinal argument to extremize information rents, and shows that an optimizer can be chosen among threshold mechanisms implemented by posted prices and excise taxes; Theorem \ref{t: threshold mechanisms frontier} then uses the appropriate analogue of regularity of the virtual value function---strong regularity---to guarantee monotonicity of $W$ instead of $\psi$, and thus characterize exactly which allocations live on the frontier. 

Theorem \ref{t: threshold mechanisms frontier} has two implications. 
First, it gives a complete characterization of the set of all possible thresholds that can arise on the fairness-efficiency frontier in terms of the marginal cost of moving the monopolist ($W(\theta)$). 
Second, using our program, a redistribution-minded policymaker need not do anything fancy to hit a redistributive target subject to some minimal efficiency loss---they need only tune a one-dimensional parameter (the frontier threshold) and then implement an excise tax as desired to realize this threshold. 

What is the economic content of the threshold cutoffs $W(\theta) \geq 0$ and $W(\theta) \leq 1 - F(\theta)$? Note that aggregate consumer surplus at threshold $k$ is given by 
\[ S(k) = A(k) + \int_k^1 (1 - F(\theta)) d\theta\]
and so $S'(k) = W(k) - (1 - F(k))$. Thus the region $\{W \leq 1 - F\}$ is exactly the region where raising the threshold no longer raises aggregate consumer surplus: the rebate no longer compensates for the surplus being destroyed. Similarly, $\{W \geq 0\}$ is the region where the excise tax is on the rising side of the Laffer curve, since $A'(k) = W(k)$; Theorem \ref{t: threshold mechanisms frontier} then implies that the frontier is exactly the set of thresholds lying between the surplus-maximizing threshold and the Laffer peak.

\section{The Supra-Pricing Property}
Our analysis affords us the following comparative result. Let $(\bar q, \bar t, 0)$ be the threshold mechanism induced by the unrestricted monopoly which is subject to no taxes or subsidies. 
When are luxury taxes optimal? Note that a luxury tax must ration the good at a rate higher than the monopolist already rations, since taxes lead the monopolist to raise its effective price and sell to fewer types. A luxury tax, then, is optimal only when the redistributive effect of giving money to those who do not have the good outweighs the value of providing them with the good. It turns out that this is satisfied so long as $F$ is strongly regular. 

\begin{proposition}[Supra-Pricing]
\label{p: supra-pricing}
    Suppose $F \in \Delta(\Theta)$ is strongly regular. For any $k \in [k_1, k_2]$, $\bar q(\theta) = 0 \implies q_k(\theta) = 0$. Moreover, the monopoly allocation is not $F$-unimprovable. 
\end{proposition}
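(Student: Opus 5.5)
The plan is to locate the unregulated monopoly threshold relative to $k_1$ and then use the envelope computations of Section 3.4. Without taxes, $\sigma \equiv 0$, so the firm solves the Myersonian problem. By regularity, $\bar q = \mathbf 1_{[k^*,1]}$ where $k^*$ solves $\psi(k^*) = 0$. Here $k^*<1$, since $\psi(1) = 1 - H(1) = 1 > 0$ on $\Theta=[0,1]$. The first claim, $\bar q(\theta)=0 \Rightarrow q_k(\theta)=0$, is then equivalent to $k \geq k^*$ for every $k\in[k_1,k_2]$. Since $k_1 \le k_2$ is built into Theorem \ref{t: threshold mechanisms frontier}, it suffices to show $k_1 > k^*$.

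\textbf{Step 1: compute $W(k^*)$.} From $W = (1-F)\psi' - f\psi$ and $\psi(k^*)=0$ we get $W(k^*) = (1-F(k^*))\psi'(k^*)$. Next, $H' = -1 - (1-F)f'/f^2$, so
\[ \psi' = 1 - H' = 2 + \frac{(1-F)f'}{f^2} \geq 2, \]
using that $f$ is nondecreasing under strong regularity. Hence $W(k^*)/(1-F(k^*)) \geq 2 > 1$. By Proposition \ref{p: monotonicity properties of W}, $W/(1-F)$ is strictly decreasing and equals $1$ at $k_1$. Therefore $k_1 > k^*$, which proves the first part.

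\textbf{Step 2: the monopoly allocation is improvable.} I would show that $I_{k_1}$ strictly improves on $I_{k^*}$ in the sense of Definition \ref{d: f maximality}, using characterization (3) of Lemma \ref{l: char of SOSD by cum info rents}. Fix $\lambda\in\mathcal L$ and write $\Lambda = \mathbb E_F[\lambda]\ge 0$. Using $I(\theta|k) = \max\{\theta-k,0\}+A(k)$, define
\[ \Phi(k) = \mathbb E_F[\lambda I_k] = \int_k^1 \lambda(\theta)(\theta-k)\,dF(\theta) + A(k)\Lambda . \]
Differentiating and using $A' = W$ gives
\[ \Phi'(k) = W(k)\Lambda - \int_k^1 \lambda\, dF. \]
Because $\lambda$ is nonincreasing, its average over the upper tail $[k,1]$ is at most its overall average, so $\int_k^1\lambda\,dF \le (1-F(k))\Lambda$. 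Therefore
\[ \Phi'(k) \ge \Lambda\bigl(W(k) - (1-F(k))\bigr). \]
On $[k^*,k_1)$ the bracket is strictly positive by Step 1 and the monotonicity of $W/(1-F)$. Integrating gives $\Phi(k_1)\ge\Phi(k^*)$ for every $\lambda\in\mathcal L$, so $I_{k_1}$ is more fair than $I_{k^*}$. Taking $\lambda\equiv 1$ gives $\Lambda=1$, and the inequality becomes strict: $S(k_1) > S(k^*)$, matching $S' = W-(1-F)$. So $I_{k_1}$ is also strictly more efficient.

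\textbf{Main obstacle.} The delicate point is Step 2's tail-average inequality together with making the derivative rigorous. $\Phi$ is absolutely continuous in $k$ because $\lambda$ is bounded, so differentiating under the integral sign is justified. I would also need to confirm that both $I_{k_1}$ and $I_{k^*}$ are feasible. This follows from Observation \ref{p: posted prices wlog} with the rebate $\sigma(0)=A(k)$, which is budget balanced, and $A(k^*)=0$ recovers the untaxed monopoly.
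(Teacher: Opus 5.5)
Your proof is correct. Step 1 is exactly the paper's computation. The paper writes $W(\theta^*) = 2(1-F(\theta^*)) + (1-F(\theta^*))^2 f'(\theta^*)/f(\theta^*)^2 > 1-F(\theta^*)$, which is your $W(k^*) = (1-F(k^*))\psi'(k^*)$ with $\psi' \ge 2$. It then uses strict monotonicity of $W/(1-F)$ just as you do.

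Where you differ is the second claim.

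\textbf{The paper's route.} The paper simply cites Theorem 2, which in effect means the domination argument inside that theorem's forward-direction proof. Using characterization (2) of Lemma 1, it computes
\[ \partial \Psi(k;p)/\partial k = pW(k) - \max\{p - F(k), 0\}. \]
It shows this is positive for every $p>0$ when $k<k_1$, via the bound $\max\{p-F(k),0\} \le p(1-F(k))$.

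\textbf{Your route.} You use characterization (3) instead. You differentiate $\Phi(k) = \mathbb{E}_F[\lambda I_k]$ and apply the tail-average inequality $\int_k^1 \lambda\,dF \le (1-F(k))\Lambda$.

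\textbf{Relation between the two.} The two estimates are dual:
\begin{itemize}
\item your inequality with $\lambda = \mathbf{1}_{[0,x]}$ is exactly the paper's percentile bound at $p = F(x)$;
\item integrating the paper's bound against the measure $\mu$ with $\mu([\theta,1]) = \lambda(\theta)$ returns your bound, since $\int F\,d\mu = \int \lambda\,dF = \Lambda$.
\end{itemize}

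\textbf{What each route buys.} Your version is self-contained and makes the strict gain explicit: taking $\lambda \equiv 1$ gives $S(k_1) > S(k^*)$, which is what failing $F$-unimprovability requires. The displayed statement of Theorem 2 only concerns the value of $\max \mathbb{E}_F[\lambda I]$, so by itself it would not exclude $I_{k^*}$ from the frontier. The paper is therefore tacitly relying on that theorem's proof rather than its statement. The paper's route is shorter given Theorem 2 and handles every $k<k_1$ at once. Your argument also extends verbatim to all of $[0,k_1)$, since $W > 1-F$ there.
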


Proposition \ref{p: supra-pricing} implies that any $F$-unimprovable allocation will ration the good more than the monopolist, i.e. sell to fewer types than an unconstrained monopolist would prefer to sell to. This has two implications. First, it gives conditions under which a luxury tax is \textit{strictly} optimal from a redistributive standpoint.
To the best of our knowledge, we are the first paper to give guidance to policymakers on this point. 
Second, it gives new settings where a monopolist strictly oversupplies the good from a consumer surplus perspective. In particular, it must be that any frontier allocation (for any target level of surplus) rations the good at a rate strictly higher than the monopolist. Consequently, it can never be fairness-maximizing to redistribute the good to those with lower valuations for the good; it is always better to tax the good, raise revenue, and redistribute the resulting revenue to those who would not have bought the good in the first place. 

The intuition behind Proposition \ref{p: supra-pricing} is as follows. There are two effects that arise from increasing the marginal tax rate by a little bit. First, some people are excluded from buying the good, because it has become more expensive. Second, however, the tax raises revenue, and thus increases (by the boundary condition above) the subsidy given to everyone who does not purchase the good, $\sigma(0)$. 
While the first force makes the equilibrium distribution less fair (and efficient) in an SOSD sense---the surplus of all people weakly decreases with a tax, not accounting for the redistributive effect---the second force compensates those who no longer buy and can redistribute surplus from those with high valuations to those with lower valuations. 
Because the distribution is convex (by strong regularity),  mass is concentrated at high
valuations, so there are many consumers to tax and comparatively few to compensate. Thus taxation (as opposed to subsidization) is strictly optimal.

\section{An Example}

Consider the case now where the marginal distribution is uniform, so that $F = \text{Unif}[0, 1]$. We use this simple case to illustrate the main results of the paper. 
First, suppose there is no tax or subsidy, so the monopolist optimally posts a price of $\frac12$. What distributions of information rents are on the frontier? Note the uniform distribution is strongly regular, since $f(\theta) = 1$ is nondecreasing and log-concave. Moreover, we have that 
\[ W(\theta) = (1 - \theta) \cdot 2 - 1(2\theta - 1) = 3 - 4\theta \]
and so we know by Theorem \ref{t: threshold mechanisms frontier} that a threshold mechanism is on the frontier if and only if it has a threshold such that $W(\theta) \leq 1 - \theta$ and $W(\theta) \geq 0$. These two inequalities together 
imply that $k \in [\frac23, \frac34]$ is the set of admissible thresholds which are on the frontier, by Theorem \ref{t: threshold mechanisms frontier}. 

Note that $\frac23 > \frac12$, i.e. the set of admissible threshold mechanisms will always ration away at least $\frac16$ more of the mass of types. This is a consequence of the supra-pricing property. Hence the example is also consistent with Proposition \ref{p: supra-pricing}. Note to implement these prices, we need to set a tax $\tau = \psi(k) = 2k - 1$ for $k \in [\frac23, \frac34]$, so the excise tax ranges over $[\frac13, \frac12]$ on the frontier. 

\section{Conclusion}

This paper develops a model of redistributive commodity taxation in monopolistic markets. We use (suitably adapted) Myersonian arguments to analyze the set of all implementable and optimal allocations. 
We identify specific conditions under which frontier-optimal policies take the form of excise taxes---taxing units to induce specific posted prices. Furthermore, we establish the boundaries of the fairness-efficiency frontier, identifying when luxury taxes are strictly optimal and subsidies are excluded from the optimal policy mix. 

Our model has several limitations and opens up promising directions for future research. First is the behavior of threshold mechanisms outside of the strongly regular case. While we conjecture we can relax this assumption to the case that $W(\theta)$ is nonincreasing whenever it is nonnegative, we do not know what the weakest condition on $F$ is to ensure that we can pin down exactly which threshold mechanisms are on the frontier.
Second, characterizing when subsidies can appear on the frontier absent strong regularity is an open question.
Finally, understanding how luxury taxation interacts with income taxation seems like a particularly promising area for future research. While for this project we abstracted away from income taxation, focusing on policymakers who only have access to commodity taxes (such as Washington state), a full, more general analysis would likely endow the policymaker with both tools, and we see this as a fruitful area for future research. 

\bibliography{biblio.bib}

\newpage
\appendix
\section*{APPENDIX}
\section{Omitted Proofs}

\subsubsection*{PROOF OF PROPOSITION \ref{p: revelation principle}}
\begin{proof}

Fix $(\Gamma, m) = (\{q_\Gamma, t_\Gamma\}, m)$ and let $\{q, t\}$ be the associated direct mechanism. Clearly any deviation by type $\theta$ to sending $\theta'$ under the direct mechanism could have been replicated by sending $m(\theta')$ instead of $m(\theta)$ in the original mechanism. Thus the direct mechanism inherits incentive compatibility of the original one. 

Now we wish to show the direct mechanism defined above is optimal for the firm given a fixed $\sigma$. Suppose $(\Gamma, m)$ was optimal so 
$$\bb{E}_F[t_\Gamma(m(\theta))] \geq \bb{E}_{t_{\Gamma'}}[(m'(\theta))]$$
for every potential alternative mechanism $\Gamma' = (q', t')$ and equilibrium $m' \in \mcal M(\Gamma'; \sigma)$. Then, $(\{q, t\}, Id)$ is equivalent to $(\{q_\Gamma, t_\Gamma\}, m)$ so
$\bb{E}_F[t(\theta)] = \bb{E}_F[t_\Gamma(m(\theta))].$
Thus, 
$\bb{E}_F[t(\theta)] \geq \bb{E}_{t_{\Gamma'}}[(m'(\theta))]$
for every potential alternative mechanism $\Gamma' = (q', t')$ and equilibrium $m' \in \mcal M(\Gamma'; \sigma)$ as well. As such, $(\{q, t\}, Id)$ was optimal.
\end{proof}

\subsubsection*{PROOF OF PROPOSITION \ref{p: increasing is optimal}}
\begin{proof}
The converse. Clearly if $q$ is optimal then it is implementable and thus nondecreasing. 

The forward direction. 
    Fix an allocation $q$ and let $\Theta^o = q^{-1}((0, 1))$. We first prove the result under the assumption $q$ is strictly increasing on $\Theta^o$. 

Fix a type $\theta$; to implement $q(\theta)$, we need 
\[ q(\theta) \in \argmax_{Q \in [0, 1]} \left( \theta - \frac{1 - F(\theta)}{f(\theta)} \right) Q + (\sigma(Q) - \sigma(0))  \]
Note the objective is supermodular in $(\theta, Q)$ and so there is a selection from the maximum correspondence such that $q(\theta)$ is nondecreasing regardless of $\sigma$. 

There are now two cases. First, suppose $\theta \in \Theta^o$. In that case the first order condition is necessary, and hence we have that $q(\theta)$ must solve the first order condition 
\[ \theta - \frac{1 - F(\theta)}{f(\theta)} + \sigma'(Q) = 0. \]
We will construct $\sigma$ to satisfy the FOC  . Since $q$ is strictly monotone on $\Theta^o$, it is locally invertible, and hence setting $Q = q(\theta)$, $\theta = q^{-1}(Q)$ implies that the first order condition is solved by 
\[ \sigma'(Q) = \frac{1 - F(q^{-1}(Q))}{f(q^{-1}(Q))} - q^{-1}(Q) \]
where $Q = q(\theta)$ is the target allocation.
This formulation also allows us to verify the second order condition for optimality, which requires $\sigma''(Q) \leq 0$; in particular, note
\begin{align*}
     \sigma''(Q) = - \left( \frac{[f(q^{-1}(Q))^2 + (1 - F(q^{-1}(Q))) f'(q^{-1}(Q))] \frac{d}{dQ} q^{-1}(Q)}{f(q^{-1}(Q))^2} \right) - \frac{d}{dQ} q^{-1}(Q) 
     \\ = - \frac{d}{d\theta}\left(\theta - \frac{1 - F(\theta)}{f(\theta)} \right) \cdot \frac{d}{dQ} q^{-1}(Q) \leq 0
\end{align*}
where the last inequality follows from regularity of $F$ and the fact $q^{-1}$ is strictly increasing on $(0, 1)$ and hence $\frac{d}{dQ} q^{-1}(Q) > 0$. 
Note this formula holds for all $Q \in (0, 1)$ chosen; in particular, this implies that one solution to the differential equation for $\sigma'$ is 
\[ \sigma(Q) = \int_0^Q \frac{1-F(q^{-1}(s))}{f(q^{-1}(s))} - q^{-1}(s) ds. \]
Now suppose $\theta \not\in \Theta^o$, so that the target allocation 
is either $1$ or $0$. Define $\sigma(1)$ to be 
\[ \int_0^1 \frac{1-F(q^{-1}(s))}{f(q^{-1}(s))} - q^{-1}(s) ds \]
and let $\sigma(0)$ be chosen by continuity so that $\sigma(0) = \lims_{Q \to 0} \sigma(Q)$\footnote{We note that as constructed that $q^{-1}(s)$ is defined only on the image of $q$; we extend the inverse to the entire domain by picking an arbitrary differentiable extension that maintains optimality of the chosen quantity.}. 

Having defined $\sigma$ as above, we now need to show that $q$ is chosen optimally in response to $\sigma$. Let the firm's optimal allocation rule be $\hat{q}(\theta).$ If $\hat{q}(\theta) \in (0, 1)$ then by the FOC analysis above, $\hat{q}(\theta) = q(\theta)$. By the monotone comparative statics from earlier, $\hat{q}$ is (weakly) increasing in $\theta$. Thus, $\hat{q}^{-1}((0, 1))$ is connected; furthermore, $\hat{q} = 0$ for all $\theta$ below $\hat{q}^{-1}((0, 1))$ and $\hat{q} = 1$ for all $\theta$ above $\hat{q}^{-1}((0, 1))$. This exactly coincides with $q$ since $q$ is also weakly increasing.

We can now generalize the result to arbitrary $q(\theta)$, even if it is not strictly increasing on $\Theta^o$ via a standard upper hemicontinuity argument. In particular, note that 
\[ q \in \argmax_{\tilde q \text{ increasing}} R(\tilde q; \sigma)\]
is a compact maximization problem over a continuous domain (see \cite{borgers2015introduction} for the argument), and hence $q$ is upper hemi-continuous in $\sigma$ by Berge's maximum theorem. 

Let $\{q\}$ be some strictly increasing sequence of functions approximating pointwise a nondecreasing $q$. Next, given $\{q\}$ define $\{\sigma\}$ as above to ensure 
\[ q \in \argmax_{\tilde q \text{ increasing}} R(\tilde q; \sigma).\] 
Note by definition that $\{\sigma\}$ are uniformly bounded; moreover it is of uniformly bounded variation because the inverse hazard rate is bounded uniformly on $\Theta$ (since it is a continuous function over a compact set). Helly's generalized selection theorem (for functions of bounded variation) then implies that $\{\sigma_n\}$ has a convergent subsequence; similarly Helly's selection theorem (for monotone functions) gives a convergent subsequence $\{q_n\}$. Since $R(\tilde q; \sigma_n) \to R(\tilde q; \sigma)$ by the dominated convergence theorem, we can conclude via upper hemicontinuity that $q \in \argmax_{\tilde q \text{ increasing}} R(\tilde q; \sigma)$, where we restrict to a desired subsequence as necessary so that ${q_{n_k}} \to q$ as $\sigma_{n_k} \to \sigma$ for some tax $\sigma$. 

Finally, note that we get budget balance for free: the firm's problem depends on $\sigma$ only through $\sigma(\cdot) - \sigma(0)$, so $\sigma$ can be shifted by a constant without affecting optimality or $v(0)$. Thus there is a budget-balanced $\sigma$ that induces the desired allocation. 
\end{proof}

\subsubsection*{PROOF OF LEMMA \ref{l: char of SOSD by cum info rents}}

    \begin{proof}

    First, we show (1) and (2) are equivalent. We begin with some computations for an arbitrary CDF $G$. 
        Applying integration by parts gives
        $$\int_{-\infty}^{z} G(x) dx = [x G(x)]_{-\infty}^{z} - \int_{-\infty}^{z} x dG(x).$$
        Since all possible information rents are bounded, $[x G(x)]_{-\infty}^{z} = zG(z)$. Next, using change of variables with $x = R(u)$ and letting $p^* = G(z)$ gives $dG(x) = dG(R(u)) = du$ and $R^{-1}(-\infty) = G(-\infty) = 0, R^{-1}(x) = G(x) = p^*$ so
        $$\int_{-\infty}^{z} x dG(x) = \int_{0}^{p^*} R(u) du = H(p^*)$$
        so 
        $$\int_{-\infty}^{z} G(x) dx = z G(z) - H(p^*) =zp^* - H(p^*).$$
        This value of $p^*$ satisfies
        $$\frac{d}{dp} \left[zp - H(p)\right] = z - \frac{d}{dp} \int_{0}^{p} R(u) du = z - R(p^*) = 0$$
        and
        $$\frac{d^2}{dp^2} \left[zp - H(p)\right] = -R'(p^*) \leq 0$$
        so $p^*$ maximizes $zp - H(p)$. 
        Note that $R$ is nondecreasing, since $I'(\theta; q) = q(\theta) \geq 0$ by the envelope theorem. 
        As such, 
        $$\int_{-\infty}^{z} G(x) dx = zp^* - H(p^*) = \max_p \{zp - H(p)\}.$$
        Then, by the definition of SOSD, $F \circ I(\cdot; q) \succsim_{SOSD} F \circ I(\cdot; q')$ if and only if 
        $$\int_{-\infty}^{z} G_q(x) dx \leq \int_{-\infty}^{z} G_{q'}(x) dx \text{ for all } z.$$
        Plugging in expressions from before, this is equivalent to 
        $$\max_p \{zp - H_q(p)\} \leq \max_p \{zp - H_{q'}(p)\} \text{ for all } z.$$
        If $H_q(p) \geq H_{q'}(p)$ for all $p$ then $zp - H_q(p) \leq zp -H_{q'}(p)$ for all $p$ so taking maximums over $p$ on both sides preserves the inequality.
        Conversely, suppose for the sake of contradiction that the SOSD condition holds so that $\int_{-\infty}^{z} G_{q}(x) d x \leq \int_{-\infty}^{z} G_{q^{\prime}}(x) d x$ for all $z$, but there exists some $p^{*}$ for which $H_{q}(p^{*}) < H_{q^{\prime}}(p^{*})$. 
        Since $H_{q'}$ is convex, there exists a subgradient $z^*$ of $H_{q'}$ at $p^*$. By the definition of the convex conjugate, one has that 
        \[ \int_{-\infty}^{z^*} G_{q'}(x) dx = \max_p \left\{ z^* p - H_{q'}(p) \right\} = z^*p^* - H_{q'}(p^*).\]
        Maximizing over $p$ yields a weakly greater value than evaluating at $p^*$ specifically for the allocation $q$, and hence we get that 
        \[  \int_{-\infty}^{z^*} G_{q}(x) dx = \max_p \left\{ z^* p - H_{q}(p) \right\} \geq z^*p^* - H_{q}(p^*)\]
    Since $H_q(p^*) < H_{q'}(p^*)$, we get that 
    \[ z^* p^* - H_q(p^*) > z^* p^* - H_{q'}(p^*).\]
    Together, this implies 
    \[ \int_{-\infty}^{z^*} G_q(x) dx \geq z^* p^* - H_q(p^*) > z^* p^* - H_{q'}(p^*) = \int_{-\infty}^{z^*} G_{q'}(x) dx. \]
    Thus, 
    \[ \int_{-\infty}^{z^*} G_q(x) dx > \int_{-\infty}^{z^*} G_{q'}(x) dx\]
    contradicting the assumption $q \succsim_{SOSD} q'$. 

    Next, we give the equivalence between (2) and (3). Fix any $\lambda \in \mcal L$ and define $\mu$ such that $\mu([\theta, 1]) = \lambda(\theta)$ for all $\theta$; such a $\mu$ exists because $\lambda$ is bounded and nonincreasing. Note that 
    \[ R_q(F(\theta)) = I(\theta; q) \implies \int_0^x I(\theta; q) dF(\theta) = H_q(F(x)).\]
    This implies that 
    \[ \bb{E}_F[\lambda I] = \int_\Theta I(\theta; q) \mu([\theta, 1]) dF(\theta) = \int_\Theta\left[\int_0^x I(\theta; q) d F(\theta) \right] \mu(dx) = \int_\Theta H_q(F(x))\mu(dx) \]
    where we interchange the order of integration by Fubini's theorem. 
    Now suppose (3) holds. For fixed $x$ let $\lambda = \mathbf{1}\{\theta \leq x\} \in \mcal L$ (corresponding to $\mu = \delta_x$). We then get that $H_q(F(x)) \geq H_{q'}(F(x))$. Varying across $x$ and noting $F(x)$ is full support gives (2). For the converse, integrate the inequality $H_q(F(x)) \geq H_{q'}(F(x))$ against $\theta$  and the positive measure $\mu$ from above yields (3). 
\end{proof}

\subsubsection*{PROOF OF OBSERVATION \ref{p: posted prices wlog}}
\begin{proof}
    Let $\sigma(q) = C - \tau q$ be an excise tax. Then for $q(\theta)$ to be optimal, it must be that 
    \[ q(\theta) \in \argmax_{Q \in [0, 1]} \left( \theta - \frac{1 - F(\theta)}{f(\theta)} - \tau\right) Q + C. \]
  Yet it is clear the solution to the above is given by 
  \[ q(\theta) = \mathbf{1}\left\{ \theta - \frac{1 - F(\theta)}{f(\theta)} \geq \tau \right\}\]
  which is clearly a threshold mechanism by Myersonian regularity. This gives the converse. For the forward direction, note that we can recover every threshold mechanism by varying $\tau$ as the virtual value function is continuous and monotone. 
\end{proof}

\subsubsection*{PROOF OF PROPOSITION \ref{p: monotonicity properties of W}}
\begin{proof}
First, we remark strong regularity implies convexity of $F$ and concavity of the virtual value function. To see this, note that 
    \[ \frac{d}{ds} \log(f) = \frac{f'}{f} \geq 0 \iff f' \geq 0 \]
    which gives convexity of $F$. 
    Second, let $H(s) = \frac{1 - F(s)}{f(s)}$ denote the inverse hazard rate. Note that 
    \[ \psi(s) = s - H(s) \implies \psi''(s) = -H''(s)\]
    so we need that $H''(s) \geq 0$ for $\psi$ to be concave. A computation gives that 
    \[ H'(s) = -1 - \frac{(1 - F(s))f'(s)}{f(s)^2}  = -1 - H(s) \frac{f'(s)}{f(s)} \]
    Differentiating again gives 
    \[ H''(s) = -\left( H'(s)\frac{f'(s)}{f(s)} + H(s)\left( \frac{f(s) f''(s) - f'(s)^2}{f^2} \right) \right) \]
    Note moreover that $\frac{d}{ds} \log(f(s)) \geq 0$, and
      \[ \frac{d^2}{ds^2}\log(f(s)) = \frac{d}{ds} \frac{f'(s)}{f(s)} = \frac{f(s) f''(s) - f'(s)^2}{f(s)^2} \leq 0\]
      by log-concavity of $f$.
      Hence $H''(s)$ is nonnegative, since 
    \begin{enumerate}
        \item $H'(s) \leq 0$ by strong regularity (the inverse hazard rate is decreasing). 
        \item $\frac{f'(s)}{f(s)} \geq 0$ since $f' \geq 0$ and $f > 0$. 
        \item $H(s) \geq 0$ by definition. 
        \item $\frac{f(s)f''(s) - f'(s)^2}{f(s)^2} \leq 0$ by log-concavity of $f$. 
    \end{enumerate}
  Hence $H''(s) \geq 0$, so $\psi''(s) \leq 0$ and hence $\psi$ is concave. 
  
Now nonincreasingness of $W$.
A basic computation implies we can rewrite $W(\theta)$ as 
\[ W(\theta) = 3(1-F(\theta)) - \theta f(\theta) + \frac{(1-F(\theta))^2 f'(\theta)}{f(\theta)^2}\]
and 
\[ W'(\theta) = -4f(\theta) - \theta f'(\theta) + \frac{d}{d\theta} \left[ \frac{f'(\theta)}{f(\theta)} \cdot \frac{(1-F(\theta))^2}{f(\theta)} \right]\]
by a computation. 

\noindent Since strong regularity implies $f' \geq 0$, we know that 
\[ \frac{d}{d\theta}\frac{(1-F(\theta))^2}{f(\theta)}=  -\frac{1-F(\theta)}{f(\theta)^2} [2f(\theta)^2 + (1-F(\theta))f'(\theta)] \leq 0 \text{  and  } -\theta f'(\theta) \leq 0.\]
Moreover, strong regularity implies $\frac{f'(\theta)}{f(\theta)}$ is nonnegative and nonincreasing, and hence we know that 
\[ \frac{d}{d\theta} \left[ \frac{f'(\theta)}{f(\theta)} \cdot \frac{(1-F(\theta))^2}{f(\theta)} \right] \leq 0. \]
Together, these two observations allow us to bound $W'(\theta)$ from above by $-4 f(\theta)$. Since $f$ is fully supported, $W'(\theta) < 0$, as desired. 

The second statement. Note
    \[ \frac{W(s)}{1 - F(s)} = \psi'(s) - \frac{f(s)}{1 - F(s)} \psi(s) = \psi'(s) - \frac{\psi(s)}{H(s)} \]
    where $H(s) = \frac{1 - F(s)}{f(s)}$. 

    Further simplifying gives that this can be written as 
    \[ \psi'(s) - \frac{\psi(s)}{H(s)} = \psi'(s) - \frac{s - H(s)}{H(s)} = \psi'(s) - \frac{s}{H(s)} + 1 \]
    and so its derivative can be written as 
    \[ \psi''(s) - \frac{H(s) - H'(s) s}{H(s)^2} \]
    Recall that $\psi''(s) \leq 0$. Moreover, note $H(s) > 0$ always on the interior, while by strong regularity $H'(s) \leq 0$, i.e. the inverse hazard rate is decreasing whenever it is defined.  Thus the second $\frac{H(s) - H'(s) s}{H(s)^2}$ term is the sum of nonnegative terms and hence is strictly positive, but is being subtracted. Thus, $\frac{W(s)}{1 - F(s)}$ is strictly decreasing. 
\end{proof}

\subsubsection*{PROOF OF THEOREM \ref{t: threshold mechanisms frontier}}
\begin{proof}
First, the converse. We show that if $(q, t, \sigma)$ is a threshold mechanism inducing a threshold in the interval, then it must be on the frontier. 

\begin{lemma}
\label{l: all thresholds on frontier}
   Fix a tax policy $\sigma$ inducing a threshold mechanism with threshold $k \in \{\theta : W(\theta) \leq 1 - F(\theta), W(\theta) \geq 0\}$, and let $I$ be its information rent. Then $I$ is $F$-unimprovable. 
\end{lemma}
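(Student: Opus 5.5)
The plan is to prove Lemma \ref{l: all thresholds on frontier} through the welfare-weight characterization in Lemma \ref{l: char of SOSD by cum info rents}. For each $k$ in the interval, I will construct a weight $\lambda_k \in \mcal L$ for which the threshold mechanism at $k$ is the \emph{unique} maximizer of $\bb{E}_F[\lambda_k I(\cdot;q)]$ over nondecreasing $q$. The conclusion then follows by contradiction. Suppose a feasible $I'$, induced by $q'$, were more fair and more efficient than $I_k$, with at least one strict. By (1)$\Rightarrow$(3) of Lemma \ref{l: char of SOSD by cum info rents}, $\bb{E}_F[\lambda_k I'] \geq \bb{E}_F[\lambda_k I_k]$. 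Uniqueness then forces $q' = q_k$ a.e., so $I' = I_k$, and neither comparison can be strict.

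\textbf{Step 1: reduce to a linear program with kernel $g$.} Using the affine representation $I(\theta;q) = \int_0^\theta q - \int_\Theta W q - v(0)$, with $v(0)$ dropped since the lowest type is not served, and applying Fubini, I write
\[ \bb{E}_F[\lambda I] = \int_\Theta q(s)\, g(s)\, ds, \qquad g(s) = \int_s^1 \lambda\, dF - \Lambda W(s), \qquad \Lambda = \int_\Theta \lambda\, dF. \]
Every nondecreasing $q$ is a mixture of thresholds (the Choquet step in the proof of Theorem \ref{t: only thresholds}). So it suffices to show that $\Phi(x) = \int_x^1 g(s)\,ds$ is uniquely maximized at $x = k$. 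A sufficient condition is single crossing: $g < 0$ on $[0,k)$ and $g > 0$ on $(k,1]$. This also gives uniqueness, because any $q$ that is not a.e.\ equal to $\mathbf 1_{[k,1]}$ puts mass on thresholds with $\Phi(x) < \Phi(k)$.

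\textbf{Step 2: choose the weights.} I will take $\lambda_k = \mathbf 1\{\theta \leq k\} + c$ with $c \geq 0$, which is nonincreasing, bounded and left-continuous. This gives $\Lambda = F(k) + c$. The constant $c$ is pinned down by requiring $g(k) = 0$, i.e.\ $c(1-F(k)) = \Lambda W(k)$, which yields
\[ c = \frac{F(k)\,W(k)}{1 - F(k) - W(k)}. \]
This is nonnegative precisely when $0 \leq W(k) \leq 1 - F(k)$, which is where the hypothesis on $k$ enters. At the endpoint $k = k_1$, $c$ blows up. There I normalize and instead use $\lambda \equiv 1$, i.e.\ pure efficiency: $k_1$ maximizes $S$, since $S' = W - (1-F)$ changes sign once by Proposition \ref{p: monotonicity properties of W}. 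For the unimprovability argument at $k_1$, I would refine this by combining efficiency with a small fairness weight, or by arguing that $S$ has a strict maximum at $k_1$.

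\textbf{Step 3: verify single crossing (main obstacle).} On $(k,1]$ we have $g(s) = (1-F(s))\bigl[c - \Lambda\, W(s)/(1-F(s))\bigr]$. Since $W/(1-F)$ is strictly decreasing (Proposition \ref{p: monotonicity properties of W}) and the bracket vanishes at $k$, we get $g > 0$ there. The harder side is $[0,k)$, where
\[ g(s) = \Lambda - (1+c)F(s) - \Lambda W(s). \]
Here I would write $g(s) = (1-F(s))\bigl[c - \Lambda W(s)/(1-F(s))\bigr] + (F(k) - F(s))$ and try to show it is negative. The first term is negative by the strict monotonicity of $W/(1-F)$. The issue is that $F(k) - F(s) > 0$, so the two terms pull in opposite directions.

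My first attempt would compare derivatives directly: $g'(s) = -(1+c)f(s) - \Lambda W'(s) \geq (4\Lambda - 1 - c)f(s)$, using $W' \leq -4f$ from the proof of Proposition \ref{p: monotonicity properties of W}. If this is positive on $[0,k)$, then $g$ is increasing up to $g(k) = 0$ and hence negative to the left of $k$.

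If that bound fails for some $k$, I would adjust the weight to $\lambda = \beta\,\mathbf 1\{\theta \leq k\} + c$ and use the extra parameter $\beta$. The alternative is to verify the weaker condition $\Phi(x) < \Phi(k)$ directly, through the explicit form $\Phi(x) = \Lambda\bigl[A(k) - A(x)\bigr] + \cdots$, using $A' = W$ and the concavity of $\psi$. I expect this left-side sign check to be the crux of the proof.
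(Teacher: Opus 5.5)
\textbf{There is a genuine gap: the weight you chose cannot work, so the left-side sign check is false in general, not merely unproven.} Your architecture matches the paper's: find $\lambda\in\mcal L$ for which $q_k$ is the unique maximizer over nondecreasing $q$, then argue by contradiction through (1)$\Rightarrow$(3) of Lemma~\ref{l: char of SOSD by cum info rents}. Your formula for $c$ and your right-side argument are also correct. The problem is the choice $\lambda_k=\mathbf 1\{\theta\le k\}+c$.

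Using $g(k)=0$ one gets $(1+c)/\Lambda=(1-W(k))/F(k)$. So $g(x)<0$ for $x<k$ is equivalent to $\frac{1-W(x)}{F(x)}<\frac{1-W(k)}{F(k)}$, and strong regularity does not imply this. Here is a counterexample. Let $f$ be log-linear and very steep on $[0,0.7]$, carrying mass $0.04$, and constant, $f\equiv m=3.2$, on $[0.7,1]$, with the kink in $\log f$ smoothed; this $F$ is strongly regular. On $[0.7,1]$ we have $1-F(x)=m(1-x)$ and $W(x)=m(3-4x)$. Hence $k_2=3/4$, $F(k_2)=0.2$, and $k_1<0.7$ because $W(0.7)=0.64<0.96=1-F(0.7)$. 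At $k=k_2$ your weight is $\mathbf 1\{\theta\le 3/4\}$ (since $c=0$), and for $x\in[0.7,3/4)$
\[ \Phi'(x)=F(k_2)W(x)-(F(k_2)-F(x))=0.64x-0.48<0 . \]
Concretely, the bottom $20\%$ receive cumulative rents $0.0808$ at threshold $0.7$ but only $0.08$ at $k_2$.

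More generally, on a constant-density stretch $(1-W)/F$ is decreasing whenever $m>3$, so your weight fails for every $k\in(0.7,3/4]$. The economics: because the step sits at $k$, lowering the threshold lets part of the favored group buy. When density just below $k$ is high, their consumption gains outweigh the lost rebate.

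Your fallbacks do not rescue this. The family $\beta\mathbf 1\{\theta\le k\}+c$ is the same up to scale once $g(k)=0$ pins $c/\beta$. The direct comparison $\Phi(x)<\Phi(k)$ fails in this example, and so does the bound $4F(k)+3c>1$ ($0.8<1$).

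\textbf{The fix: place the step of the weight at a type that never buys under any candidate threshold, i.e.\ at or below $k_1$.} The paper takes $\lambda=\alpha\mathbf 1\{\theta\le x_L\}+(1-\alpha)$ with $F(x_L)=p_L\le F(k_1)$, that is, $J(x)=\alpha H(p_L\mid x)+(1-\alpha)H(1\mid x)$. It chooses $\alpha$ so that $J'(k)=0$, which gives $\alpha=0$ at $k_1$ and $\alpha=1$ at $k_2$.

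\begin{itemize}
\item For $x\ge k_1$ the favored group only receives the rebate. Hence $J'(x)/(1-F(x))=[\alpha p_L+1-\alpha]\,W(x)/(1-F(x))-(1-\alpha)$, which is strictly decreasing by Proposition~\ref{p: monotonicity properties of W}. In your notation, the kernel on $[x_L,1]$ is $(1-F)$ times (a constant minus $\Lambda W/(1-F)$). This is exactly the form of your right-side argument, now valid on both sides of $k$.
\item For $x<k_1$, the inequality $W(x)>1-F(x)$ makes $\partial_x\Psi(x;p)=pW(x)-\max\{p-F(x),0\}$ strictly positive for every $p>0$, so $g<0$ there. This fact holds for any weight, including yours, so the only real trouble with your weight is on $[k_1,k)$.
\end{itemize}

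With this weight the rest of your argument goes through unchanged: uniqueness over mixtures of thresholds, and the contradiction step. The endpoint $k_1$ also needs no refinement, since $\lambda\equiv 1$ already has the unique maximizer $k_1$ by strict monotonicity of $W/(1-F)$.
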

\begin{proof}
Recall that if $I(\theta | k)$ represents the information rent of a consumer of type $\theta$ given threshold $k$, then
\[ I(\theta | k) = \max\left\{\theta - k, 0\right\} - \int W(s) \mathbf{1}_{\{s \geq k\}} ds. \]
 Let $I_p(k)$ denote the $p$-th percentile of aggregate information rents. Cumulative information rents up to percentile $p$ induced by thresholds $k$ are then 
    \[ H(p | k) = \Psi(k; p) = \int_0^p R(u) du. \]
    Differentiating gives that 
    \[ \frac{\partial H(p | k)}{\partial k} = \bb{E}_F\left[ \frac{\partial I(\theta | k)}{\partial k}\mathbf{1}_{\{I(\theta | k) \leq I_p(k)\}}  \right]\]
   using the Leibniz integral rule as necessary. 
A further computation gives
    \[ \frac{\partial I(\theta|k)}{\partial k} = -\mathbf{1}_{\{\theta \geq k\}} + W(k).\]
Combining this with the above expression gives 
    \[ \frac{\partial H(p | k)}{\partial k} = \bb{E}\left[ \left( - \mathbf{1}_{\{\theta \geq k\}} + W(k) \right) \mathbf{1}_{I(\theta | k) \leq I_p(k)} \right] = p W(k) - \bb{E}\left[\mathbf{1}_{\{\theta \geq k\}}\mathbf{1}_{I(\theta | k) \leq I_p(k)} \right].\]
    Consider now the set 
    \[ k \in \{\theta: W(\theta) \leq 1 - F(\theta), W(\theta) \geq 0\}.\]
    Since we know that 
    \[ W(0) = 3(1 - F(0)) - 0f(0) + \frac{(1 - F(0))^2f'(0)}{f(0)^2} = 3 + \frac{f'(0)}{f(0)^2} > 1 \]
    we have that $k_1 > 0$. This implies that no agent in the set $\Theta_0 = [0, k_1)$ consumes any good at any potential threshold. Since $F$ is fully supported, $F(k_1) > 0$. 
    Taking $p  = \frac{F(k_1)}{2}$ gives that 
    \[ \frac{\partial H(p | k)}{\partial k} = \frac{F(k_1)}{2} W(k) - 0\]
    since no agent below the $p$-th percentile receives any good. Since $W(k) \geq 0$ for all potential thresholds, this gives that $\frac{\partial H(p | k)}{\partial k}$ evaluated at $p = F(k_1)/2$ is nonnegative for all potential thresholds $k$. 

    Finally, taking $p = 1$ instead gives that 
    \[ \frac{\partial H(1 | k)}{\partial k} = W(k) - \bb{E}[\mathbf{1}_{\left\{\theta \geq k\right\}}] = W(k) - (1 - F(k)). \]
    Since $W(k) \leq 1 - F(k)$ for every potential threshold, this gives that $\frac{\partial H(1|k)}{\partial k} \leq 0$. 

    Together, these two inequalities imply for any potential threshold $k$ on the frontier, raising $k$ will always increase cumulative information rents for low $p$ and decrease them for high $p$, and so no threshold within the set of potential thresholds can dominate another. 

    Finally, we show that no mixture of thresholds can dominate any threshold inside the desired interval.
    To do this, we construct some nonincreasing welfare weight $\lambda$ such that the threshold at $k$ globally maximizes $\bb{E}_F[\lambda I]$ across all pure thresholds; linearity of the objective in $q$ then implies that it must maximize the objective over all convex combinations of thresholds, giving the desired argument. 

    Fix some $k$ where $W(k) \in [0, 1 - F(k)]$. Construct our nonincreasing welfare weight in the following way. Let our regulator place weight $\alpha$ on cumulative rents at percentile $p_L \leq F(k_1)$, and the remainder of the weight on cumulative rents at $p_H = 1$. The weighted objective is then 
    \[ J(x) = \alpha H(p_L | x) + (1 - \alpha)H(1 | x);\]
    a computation gives  
    \[ J'(x) = \alpha p_L W(x) + (1-\alpha) \big[ W(x) - (1 - F(x)) \big]. \]
    Choosing 
    \[ \alpha = \frac{1 - F(k) - W(k)}{1 - F(k) - W(k) + p_L W(k)} \]
    implies that $J'(k) = 0$, so $k$ maximizes this objective. Note $\alpha \in [0, 1]$ since $W(k) > 0$ and $1 - F(k) - W(k) \geq 0$. 

    It suffices to consider $x\geq k_1$: by the computation in the forward direction
below, $\frac{\partial\Psi(x;p)}{\partial x}>0$ for every $x<k_1$ and every $p$, so
$\Psi(x;p)<\Psi(k_1;p)$ for all $p>0$ and hence $J(x)<J(k_1)$ by Lemma \ref{l: char of SOSD by cum info rents}. For
$x\ge k_1$ we have $F(x)\ge F(k_1)\ge p_L$, so the displayed $J'$ is correct.

    To show this is a global maximum, note that 
    \[ \frac{J'(x)}{1 - F(x)} = \big[ \alpha p_L + 1 - \alpha \big] \frac{W(x)}{1-F(x)} - (1-\alpha).\]
    Since $\frac{W(x)}{1 - F(x)}$ is strictly decreasing in $x$, $\frac{J'(x)}{1 - F(x)}$ is strictly decreasing in $x$ and zeros at $x = k$. This implies that $J(x)$ is strictly single-peaked, so $k$ is the unique global maximizer among all pure thresholds. Thus $k$ maximizes the desired objective over all feasible allocations, so by Lemma \ref{l: char of SOSD by cum info rents} no feasible allocation can strictly dominate it at every percentile. 
\end{proof}

Now we establish the forward direction that it is sufficient for our regulator to restrict to threshold mechanisms when seeking to maximize their objective. 
 
Fix some nonincreasing $\lambda$. Theorem \ref{t: only thresholds} then implies that there is some threshold mechanism with threshold $k$ that attains this maximum. 
We now show that $k$ must be in our candidate interval. Note 
$$\Psi(k; p) = \int_0^{p} [\max\{F^{-1}(t)-k, 0\} - \int_k^1 W(s)ds] dt$$
so $$\frac{d\Psi}{dk} = pW(k) - \max\{p - F(k), 0\}.$$
Next, let $k_1$ solve $W(k_1) = 1-F(k_1)$ and $k_2$ solve $W(k_2) = 0$ so the set of admissible thresholds is $[k_1, k_2]$. If $k < k_1$ then $W(k) > 1-F(k)$. If $p < F(k)$ then
$$\frac{d\Psi}{dk} = pW(k) > 0$$
while if $p \geq F(k)$ then
$$\frac{d\Psi}{dk} > p(1-F(k)) - p + F(k) = (1-p)F(k) \geq 0$$
so $\frac{d\Psi}{dk} > 0$ for all $k < k_1$ and all $p$. As such, any threshold $k < k_1$ is dominated by moving that up to a threshold exactly at $k_1$.

On the other hand, if $k > k_2$ then $W(k) < 0$ so $\frac{d\Psi}{dk} < 0$ for all $p > 0$. As such, any threshold mechanism at a threshold $k > k_2$ is dominated by moving that threshold down to $k_2$. This finishes the argument. \end{proof}

\subsubsection*{PROOF OF PROPOSITION \ref{p: supra-pricing}}
\begin{proof}
We start with a computation. 
    \begin{align*}
        W(s) = (1 - F(s))\psi'(s) - f(s)\psi(s) = 3(1 - F(s)) - s f(s) + \frac{(1 - F(s))^2}{f(s)^2} f'(s) \\ = (1 - F(s)) - s f(s) + 2 (1 - F(s)) + \frac{(1 - F(s))^2}{f(s)^2} f'(s) 
    \end{align*}
    At the monopoly price, we know that $\theta^* = \frac{1 - F(\theta^*)}{f(\theta^*)}$ and so 
    \[ 1 - F(\theta^*) - \theta^* f(\theta^*) = f(\theta^*) \left(\frac{1 - F(\theta^*)}{f(\theta^*)} - \theta^* \right) = 0\]
    and so we get that 
    \[ W(\theta^*) = 2(1 - F(\theta^*)) + \frac{(1 - F(\theta^*))^2}{f(\theta^*)^2} f'(\theta^*) > 1 - F(\theta^*). \]
    In particular, this implies that at the monopoly threshold, $W(\theta^*) > 1 - F(\theta^*)$. By strong regularity and Theorem \ref{t: threshold mechanisms frontier}, $\theta^*$ cannot be a threshold which is on the fairness-efficiency frontier. 
    
    Finally, consider $1 - \frac{W(\theta)}{1 - F(\theta)}$. A necessary condition by Theorem \ref{t: threshold mechanisms frontier} for this to be on the frontier is that this expression is nonnegative. However, we know that $1 - \frac{W(\theta^*)}{1 - F(\theta^*)} < 0$; since $\frac{W(\theta)}{1 - F(\theta)}$ is strictly decreasing by Proposition \ref{p: monotonicity properties of W}, for this to be positive it must be that any admissible threshold has $k > \theta^*$. This finishes the proof. 
\end{proof}

\end{document}